\newtheorem{Theorem}{Theorem}
\newtheorem{Lemma}{Lemma}
\def\be{\begin{equation}}
\def\ee{\end{equation}}
\def\ba{\begin{eqnarray}}
\def\ea{\end{eqnarray}}
\newcommand\q{\quad}
\def\Nl{{\mathchoice
{\setbox0=\hbox{$\displaystyle\rm N$}\hbox{\hbox to0pt
{\kern0.4\wd0\vrule height0.9\ht0\hss}\box0}}
{\setbox0=\hbox{$\textstyle\rm N$}\hbox{\hbox to0pt
{\kern0.4\wd0\vrule height0.9\ht0\hss}\box0}}
{\setbox0=\hbox{$\scriptstyle\rm N$}\hbox{\hbox to0pt
{\kern0.4\wd0\vrule height0.9\ht0\hss}\box0}}
{\setbox0=\hbox{$\scriptscriptstyle\rm N$}\hbox{\hbox to0pt
{\kern0.4\wd0\vrule height0.9\ht0\hss}\box0}}}}
\def\Zl{{\mathchoice
{\setbox0=\hbox{$\displaystyle\rm Z$}\hbox{\hbox to0pt
{\kern0.4\wd0\vrule height0.9\ht0\hss}\box0}}
{\setbox0=\hbox{$\textstyle\rm Z$}\hbox{\hbox to0pt
{\kern0.4\wd0\vrule height0.9\ht0\hss}\box0}}
{\setbox0=\hbox{$\scriptstyle\rm Z$}\hbox{\hbox to0pt
{\kern0.4\wd0\vrule height0.9\ht0\hss}\box0}}
{\setbox0=\hbox{$\scriptscriptstyle\rm Z$}\hbox{\hbox to0pt
{\kern0.4\wd0\vrule height0.9\ht0\hss}\box0}}}}
\def\Ql{{\mathchoice
{\setbox0=\hbox{$\displaystyle\rm Q$}\hbox{\hbox to0pt
{\kern0.4\wd0\vrule height0.9\ht0\hss}\box0}}
{\setbox0=\hbox{$\textstyle\rm Q$}\hbox{\hbox to0pt
{\kern0.4\wd0\vrule height0.9\ht0\hss}\box0}}
{\setbox0=\hbox{$\scriptstyle\rm Q$}\hbox{\hbox to0pt
{\kern0.4\wd0\vrule height0.9\ht0\hss}\box0}}
{\setbox0=\hbox{$\scriptscriptstyle\rm Q$}\hbox{\hbox to0pt
{\kern0.4\wd0\vrule height0.9\ht0\hss}\box0}}}}
\def\Rl{{\mathchoice
{\setbox0=\hbox{$\displaystyle\rm R$}\hbox{\hbox to0pt
{\kern0.4\wd0\vrule height0.9\ht0\hss}\box0}}
{\setbox0=\hbox{$\textstyle\rm R$}\hbox{\hbox to0pt
{\kern0.4\wd0\vrule height0.9\ht0\hss}\box0}}
{\setbox0=\hbox{$\scriptstyle\rm R$}\hbox{\hbox to0pt
{\kern0.4\wd0\vrule height0.9\ht0\hss}\box0}}
{\setbox0=\hbox{$\scriptscriptstyle\rm R$}\hbox{\hbox to0pt
{\kern0.4\wd0\vrule height0.9\ht0\hss}\box0}}}}
\def\Cl{{\mathchoice
{\setbox0=\hbox{$\displaystyle\rm C$}\hbox{\hbox to0pt
{\kern0.4\wd0\vrule height0.9\ht0\hss}\box0}}
{\setbox0=\hbox{$\textstyle\rm C$}\hbox{\hbox to0pt
{\kern0.4\wd0\vrule height0.9\ht0\hss}\box0}}
{\setbox0=\hbox{$\scriptstyle\rm C$}\hbox{\hbox to0pt
{\kern0.4\wd0\vrule height0.9\ht0\hss}\box0}}
{\setbox0=\hbox{$\scriptscriptstyle\rm C$}\hbox{\hbox to0pt
{\kern0.4\wd0\vrule height0.9\ht0\hss}\box0}}}}
\def\Hl{{\mathchoice
{\setbox0=\hbox{$\displaystyle\rm H$}\hbox{\hbox to0pt
{\kern0.4\wd0\vrule height0.9\ht0\hss}\box0}}
{\setbox0=\hbox{$\textstyle\rm H$}\hbox{\hbox to0pt
{\kern0.4\wd0\vrule height0.9\ht0\hss}\box0}}
{\setbox0=\hbox{$\scriptstyle\rm H$}\hbox{\hbox to0pt
{\kern0.4\wd0\vrule height0.9\ht0\hss}\box0}}
{\setbox0=\hbox{$\scriptscriptstyle\rm H$}\hbox{\hbox to0pt
{\kern0.4\wd0\vrule height0.9\ht0\hss}\box0}}}}
\def\Ol{{\mathchoice
{\setbox0=\hbox{$\displaystyle\rm O$}\hbox{\hbox to0pt
{\kern0.4\wd0\vrule height0.9\ht0\hss}\box0}}
{\setbox0=\hbox{$\textstyle\rm O$}\hbox{\hbox to0pt
{\kern0.4\wd0\vrule height0.9\ht0\hss}\box0}}
{\setbox0=\hbox{$\scriptstyle\rm O$}\hbox{\hbox to0pt
{\kern0.4\wd0\vrule height0.9\ht0\hss}\box0}}
{\setbox0=\hbox{$\scriptscriptstyle\rm O$}\hbox{\hbox to0pt
{\kern0.4\wd0\vrule height0.9\ht0\hss}\box0}}}}
\newcommand{\ca}{\mathcal A}
\newcommand{\cb}{\mathcal B}
\newcommand{\ch}{\mathcal H}
\newcommand{\cn}{\mathcal N}
\newcommand{\calr}{\mathcal R}
\newcommand{\ct}{\mathcal T}
\newcommand{\fp}{\mathfrak{p}}
\newcommand{\intsum}{\mathclap{\displaystyle\int}\mathclap{\textstyle\sum}}
\def\nn{\nonumber}
\newcommand{\eqa}{\begin{eqnarray}}
\newcommand{\neqa}{\end{eqnarray}}
\def\f{\frac}
\def\q{{\quad}}
\definecolor{darkgreen}{rgb}{0.0, 0.5, 0.13}
\newcommand{\ketbra}[2] {
	| #1 \rangle \! \langle #2 |}
\begin{document}

\title{Quantum Relativity of Subsystems}

\author{Shadi Ali~Ahmad}
\email[]{shadi.ali.ahmad.22@dartmouth.edu}
\affiliation{Department of Physics and Astronomy, Dartmouth College, Hanover, New Hampshire 03755, USA}

 \author{Thomas D. Galley}
\email{tgalley1@perimeterinstitute.ca}
\affiliation{Perimeter Institute for  Theoretical Physics, 31 Caroline St N, Waterloo, Ontario, N2L 2Y5 Canada}

\author{Philipp A. H\"{o}hn}
\email[]{philipp.hoehn@oist.jp}
\affiliation{Okinawa Institute of Science and Technology Graduate University, Onna, Okinawa 904 0495, Japan}
\affiliation{Department of Physics and Astronomy, University College London, London, WC1E 6BT, United Kingdom}

\author{Maximilian P. E. Lock}
\email[]{maximilian.lock@oeaw.ac.at}
\affiliation{Atominstitut, Technische Universit\"{a}t Wien, 1020 Vienna, Austria}
\affiliation{Institute for Quantum Optics and Quantum Information (IQOQI), Austrian Academy of Sciences, 1090 Vienna, Austria}

\author{Alexander R. H. Smith}
\email[]{arhsmith@anselm.edu}
 \affiliation{Department of Physics, Saint Anselm College, Manchester, New Hampshire 03102, USA} \affiliation{Department of Physics and Astronomy, Dartmouth College, Hanover, New Hampshire 03755, USA}

\date{\today}		

\begin{abstract}
One of the most basic notions in physics is the partitioning of a system into subsystems, and the study of correlations among its parts. In this work, we explore this notion in the context of quantum reference frame (QRF) covariance, in which this partitioning is subject to a symmetry constraint. We demonstrate that different reference frame perspectives induce different sets of subsystem observable algebras, which leads to a gauge-invariant, frame-dependent notion of subsystems and entanglement. 
We further demonstrate that subalgebras which commute \emph{before} imposing the symmetry constraint can translate into non-commuting  algebras in a given QRF perspective \emph{after} symmetry imposition. Such a QRF perspective does  not inherit the  distinction between subsystems in terms of the corresponding tensor factorizability of the kinematical Hilbert space and observable algebra. Since the condition for this to occur is contingent on the choice of QRF, the notion of subsystem locality  is frame-dependent. 
\end{abstract}

\maketitle

%========================================
%========================================
\textit{Introduction.---}
Operationally, subsystems are distinguished by physically accessible measurements. Suppose that one can measure the set of observables described by the minimal algebra $\ca$ containing a collection $\{ \mathcal{A}_i\}_{i=1}^n$ of commuting subalgebras, $[ \mathcal{A}_i, \mathcal{A}_j ] =0$ for $i\neq j$.
This implies that observables in $\mathcal{A}_i$ and $\mathcal{A}_j$ are simultaneously measurable, as expected of observables associated with distinct subsystems. When these algebras admit Hilbert space representations $\mathcal{A}\simeq \mathcal{B}(\mathcal{H})$ and $\mathcal{A}_i\simeq \mathcal{B}(\mathcal{H}_i)$, the commuting subalgebra structure can induce a tensor product structure on the composite Hilbert space $\mathcal{H} \simeq \bigotimes_{i=1}^n \mathcal{H}_i$, where $\mathcal{H}_i$ is associated with the $i$-th subsystem. Given that this tensor product structure is induced by the distinguished sets of observables $\mathcal{A}_i$, entanglement and the notion of subsystem itself is defined relative to these distinguished sets~\cite{zanardi2001virtual,barnum_generalizations_2003,barnum_subsystem-independent_2004,zanardiQuantumTensorProduct2004,viola_barnum_2010}.

The physically accessible observables and states of a system are dictated by the symmetries of the situation under consideration \cite{zanardi2001virtual,barnum_generalizations_2003,barnum_subsystem-independent_2004,zanardiQuantumTensorProduct2004,viola_barnum_2010,bartlett_reference_2007}. For example, in a gauge theory physically accessible states and observables are invariant with respect to arbitrary gauge transformations~\cite{Henneaux:1992ig}. In the canonical approach, going back to Dirac~\cite{DiracLectures64}, this invariance requirement is implemented by introducing a kinematical Hilbert space $\mathcal{H}_{\rm kin} \simeq \bigotimes_{i=1}^n \mathcal{H}_i$ that may come equipped with a kinematical tensor product structure. Supposing that $\hat{C} \in \mathcal{L}(\mathcal{H}_{\rm kin})$ is a generator of a gauge symmetry, the physical, \textit{i.e.}\ invariant states satisfy the constraint equation $\hat{C} \ket{\psi_{\rm phys}} = 0$. This is necessary, for example, for the counting of independent gauge-invariant degrees of freedom (such as photon polarizations).
Solutions to this equation may lie outside of $\mathcal{H}_{\rm kin}$ as they may not be normalizable with respect to its inner product. To overcome this issue, one introduces a new physical inner product that is used to complete the solution space of the constraint equation to form the physical Hilbert space $\mathcal{H}_{\rm phys}$ (\textit{e.g.}\ \cite{rovelli_2004,thiemannModernCanonicalQuantum2008,marolfGroupAveragingRefined2002,marolfRefinedAlgebraicQuantization1995,Giulini:1998rk,Giulini:1998kf}). Physical observables are elements of the physical algebra $\mathcal{A}_{\rm phys} \simeq \mathcal{B}(\mathcal{H}_{\rm phys})$ known as Dirac observables, and commute with the constraint on physical states, $[\mathcal{A}_{\rm phys}, \hat{C} ] \ket{\psi_{\rm phys}} = 0$. This ensures that $\mathcal{A}_{\rm phys}$ is invariant under gauge transformations generated by $\hat{C}$, which is necessary for the gauge-invariance of physical expectation values. This constraint based approach also applies to operational scenarios without \emph{bona fide} gauge symmetry, where these constraints correspond to an agent using an internal quantum system as reference frame instead of an external classical one~\cite{Krumm:2020fws}.

It is important to note that the physical Hilbert space $\mathcal{H}_{\rm phys}$ does not inherit the kinematical tensor product structure  $\mathcal{H}_{\rm kin} \simeq \bigotimes_{i=1}^n \mathcal{H}_i$ and associated notion of subsystem. Instead, a notion of subsystem must be induced by commuting subalgebras of $\mathcal{A}_{\rm phys}$, and in general, will be nonlocal with the respect to the kinematical tensor product structure.\footnote{For clarity, ``local'' in this context refers to an operator acting non-trivially only on some tensor factor of a Hilbert space.}

In this letter, we consider composite systems that are invariant under a gauge transformation admitting a tensor product representation across $\mathcal{H}_{\rm kin}$: that is, gauge transformations that act locally on the kinematical factors $\mathcal{H}_i$. We take one of these kinematical subsystems to serve as a reference frame from which the remaining subsystems are described. To do so, we make use of recent results from the theory of quantum reference frames (QRFs) to transform from the quantum theory on the physical Hilbert space $\mathcal{H}_{\rm phys}$, which encodes all QRF choices, to an isomorphic theory from the perspective of a subsystem serving as a reference frame~\cite{vanrietvelde2020change,Vanrietvelde:2018dit,hohn2020switch,Hoehn:2018whn,hoehn2019trinity,HLSrelativistic,periodtrinity,Castro-Ruiz:2019nnl,Krumm:2020fws,Giacomini:2020ahk,Giacomini:2021gei} (see also \cite{Giacomini:2017zju,Giacomini:2019fvi,delaHamette:2020dyi,streiter2020relativistic,Ballesteros:2020lgl} for a related formulation without constraints). We show that subsystems encoded by a perspective-dependent tensor product structure induce a partitioning of the physical Hilbert through the construction of sets of commuting subalgebras of so-called relational Dirac observables \cite{rovelli_2004,thiemannModernCanonicalQuantum2008,dittrich_partial_2006,dittrich_partial_2007,Rovelli:1989jn,Rovelli:1990ph,Rovelli:1990pi,Rovelli:1990jm,hoehn2019trinity,HLSrelativistic,periodtrinity,Krumm:2020fws,vanrietvelde2020change,Vanrietvelde:2018dit,hohn2020switch,Hoehn:2018whn,Chataignier:2020fys,Chataignier:2019kof} associated with different reference frame perspectives. In general, different reference frames induce different partitions of the physical Hilbert space and invariant observable algebra, and thus the associated notion of subsystems is reference frame-dependent. We identify the necessary and sufficient condition for when the physical Hilbert space inherits (some of) the kinematical subsystem partitioning in terms of the spectrum of the relevant constraint, and this condition is contingent on the choice of QRF. This allows us to develop a description of subsystems and entanglement in terms of physical Hilbert space structures that is manifestly gauge invariant and reference frame-dependent.

\textit{From physical states to QRF perspectives.---} 
Consider a kinematical Hilbert space that partitions into three factors ${\mathcal{H}_\mathrm{kin}=\mathcal{H}_\mathrm{A}\otimes\mathcal{H}_\mathrm{B}\otimes\mathcal{H}_\mathrm{C}}$, and a single constraint of the form $\hat{C}=\hat{C}_\mathrm{A}+\hat{C}_\mathrm{B}+\hat{C}_\mathrm{C}$, where each $\hat C_i$ is self-adjoint and acts only on $\mathcal{H}_i$. For simplicity, we assume that each $\hat C_i$ is nondegenerate, and treat degeneracies in the Supplemental Material. Each $\hat C_i$ thus generates a unitary representation of either the translation group $\mathbb{R}$ or $\rm{U}(1)$ on $\ch_i$~\cite{procesi2007lie}.  Consequently, $\hat C$ generates a one-parameter unitary representation of either $\mathbb{R}$ or $\rm{U}(1)$ on $\ch_{\rm kin}$, depending on the combination of the $\hat C_i$ \cite{periodtrinity}. The constraint $\hat C$ may be a Hamiltonian constraint as in gravitational systems, generating temporal reparametrization invariance and dynamics \cite{rovelli_2004,thiemannModernCanonicalQuantum2008,bojobuch,page_evolution_1983,wootters_time_1984, hohn2020switch,Hoehn:2018whn,Smith:2017pwx,Smith:2019imm, hoehn2019trinity,HLSrelativistic,periodtrinity,dittrich_partial_2006,dittrich_partial_2007,Rovelli:1989jn,Rovelli:1990jm,Rovelli:1990ph,Rovelli:1990pi,Chataignier:2020fys,Chataignier:2019kof}, or it may be the generator of a spatial symmetry, such as spatial translation invariance \cite{vanrietvelde2020change,Vanrietvelde:2018dit}. For simplicity, we do not consider interactions between the subsystems $A,B,C$ in the constraint.

As noted above, physical states satisfy $\hat{C} \ket{\psi_{\rm phys}} = 0$, and together they constitute $\mathcal{H}_{\rm phys}$. This induces a redundancy with respect to $\mathcal{H}_\mathrm{kin}$, which can be removed by identifying the choice of redundant subsystem with the choice of QRF relative to which the other systems will be described. A physical state encodes each choice of QRF, therefore assuming the role of a \emph{perspective-neutral} state, linking all the different perspectives~\cite{vanrietvelde2020change,Vanrietvelde:2018dit,hohn2020switch,Hoehn:2018whn,hoehn2019trinity,HLSrelativistic,periodtrinity}. Letting $i,j,k\in\lbrace \mathrm{A,B,C} \rbrace$, we denote the chosen reference system by $k$ and the remaining kinematical factors by $i$ and $j$. We then define 
\begin{align}
\sigma_{ij \vert k} & \ce \mathrm{spec} \left( \hat{C}_{i}+\hat{C}_{j}  \right) \cap \mathrm{spec} \left( -\hat{C}_{k} \right) , \label{eDoubCond}
\end{align}
allowing us to write an arbitrary physical state as
\begin{equation}\label{physk}
    \ket{\psi_{\rm phys}}=\:\: \intsum_{\:\:\: c_i+c_j\in\sigma_{ij\vert k}}\psi(c_i,c_j)\ket{-c_i-c_j}_k\otimes\ket{c_i}_i\otimes\ket{c_j}_j
\end{equation}
for some $\psi(c_i,c_j)$, where $\ket{c_i}_{i}$ is the eigenstate of $\hat{C}_i$ with eigenvalue $c_i$ (likewise for $j$ and $k$). Thus if $\psi(c_i,c_j)$ has non-trivial support over various values of the eigenvalues $c_i,c_j$, then $k$ is entangled with $i,j$ relative to the \emph{kinematical} tensor product structure. However, due to the redundancy, this entanglement is not gauge-invariant~\cite{hoehn2019trinity}.

We can then describe physics from $k$'s perspective via either of two paths~\cite{hoehn2019trinity,HLSrelativistic,periodtrinity}: a ``relational Schr\"{o}dinger picture'' (known in the context of Hamiltonian constraints as the Page-Wootters formalism ~\cite{page_evolution_1983,wootters_time_1984}), and a ``relational Heisenberg picture''~\cite{vanrietvelde2020change,Vanrietvelde:2018dit,hohn2020switch,Hoehn:2018whn}. In both cases, observables on $i,j$ are described relative to outcomes of an observable on $k$, namely an element of a positive operator-valued measure (POVM). The elements of this POVM can be constructed via projectors onto \emph{orientation} states of the reference frame:
\begin{equation} \label{eOrientStates}
    \ket{g}_k\ce  \:\:\:\; \intsum_{\:\:\:c_k}\, e^{i[\theta(c_k)-c_k g]}\ket{c_k}_k,
\end{equation}
where $\theta(c_k)$ are arbitrary phases and $g$ is a coordinate on $G_k$, the group generated by $\hat C_k$. These orientation states transform covariantly under $G_k$, $\ket{g'} = e^{-i(g'-g)\hat{C}_k}\ket{g}$~\cite{buschOperationalQuantumPhysics,holevoProbabilisticStatisticalAspects1982,hoehn2019trinity,HLSrelativistic,periodtrinity,Smith:2017pwx,Smith:2019imm}. The QRF perspective corresponding to $k$ is obtained by conditioning physical states on $k$ being in the orientation $g$, thus fixing the gauge, leading to a \textit{reduced physical Hilbert space} $\mathcal{H}_{ij \vert k}$. In the relational Schr\"{o}dinger picture, this proceeds via the reduction map $\mathcal{R}^{(S)}_k (g): \mathcal{H}_\mathrm{phys} \to \mathcal{H}_{ij \vert k}$ given by $\mathcal{R}^{(S)}_k (g) \ce \bra{g}_{k}\otimes\mathbf{1}_{ij}$ (with its domain restricted to $\mathcal{H}_\mathrm{phys}$). This leads to the orientation-dependent relational Schr\"{o}dinger state $\ket{\psi_{ij|k}(g)} \ce \mathcal{R}^{(S)}_k(g) \ket{\psi_{\rm phys}}\in\ch_{ij\vert k}$ and the decomposition $\ket{\psi_{\rm phys}} = \mu\int_{G_k} dg \, \ket{g}_k \otimes\ket{\psi_{ij|k}(g)}$ exhibiting the kinematical entanglement between $k$ and $i,j$, where $\mu$ is a normalization factor.

On the other hand, in the relational Heisenberg picture one first transforms $\ket{\psi_{\rm phys}}$ to shift the non-redundant information into the $i,j$ partition with a \emph{frame disentangler} (`trivialization') that is a shift conditional on frame $k$, $\ct_{k,\varepsilon}\ce\mu\int_{G_k} dg\,e^{i\varepsilon_k g}\ket{g}\!\bra{g}_k\otimes e^{i(\hat C_i+\hat C_j)g}$. This factors out the QRF, removing the kinematical entanglement between $k$ and $i,j$ (see Supplemental Material):
\begin{equation}
    \ct_{k,\varepsilon}\,\ket{\psi_{\rm phys}}
=\ket{\varepsilon_k}_k \otimes\ket{\psi_{ij\vert k}},\label{dis}
\end{equation}
where $\otimes$ denotes the kinematical tensor product between $k$ and $i,j$,
$\ket{\psi_{ij\vert k}}=e^{i(\hat C_i+\hat C_j)g}\ket{\psi_{ij\vert k}(g)}\in\ch_{ij\vert k}$ is the corresponding ``relational Heisenberg state'', and $\ket{\varepsilon_k}_k=\mu\int_{G_k}dg\,e^{i\varepsilon_kg}\ket{g}_k$. Here, $-\varepsilon_k$ must be a fixed, but arbitrary element of $\sigma_{ij\vert k}$ in which case Eq.~\eqref{dis} satisfies the transformed constraint $\ct_{k,\varepsilon}\,\hat C\,\ct_{k,\varepsilon}^{-1}\sim (\hat C_k-\varepsilon_k\mathbf{1})$, which fixes the now-redundant QRF $k$ and preserves gauge-invariance. One then conditions on the reference frame being in a given orientation of $k$, leading to the relational Heisenberg picture reduction map $\mathcal{R}^{(H)}_{k}: \mathcal{H}_\mathrm{phys} \to \mathcal{H}_{ij \vert k}$ given by
\begin{equation}
    \calr^{(H)}_{k} \ce  \mathcal{R}^{(S)}_k (g) \cn (g,\varepsilon_k)  \ct_{k,\varepsilon},\label{qcm}
\end{equation}
 where $\cn (g,\varepsilon_k)$ is a normalization factor such that $\calr^{(H)}_{k}\ket{\psi_{\rm phys}}=\ket{\psi_{ij\vert k}}$, which is (weakly) independent of $g$ and $\varepsilon$, and we therefore do not include these labels in $\calr^{(H)}_{k}$. This reduction is unitarily equivalent to acting with $\mathcal{R}^{(S)}_k(g)$ on \emph{physical states}~\cite{hoehn2019trinity,HLSrelativistic,periodtrinity}. We will use the relational Heisenberg picture in what follows, denoting the reduction map by $\mathcal{R}_k \equiv\mathcal{R}_k^{(H)}$ for simplicity.

The reduced physical Hilbert space $\ch_{ij\vert k}$ and observables on it encode the physics of $i,j$ as described from the internal perspective of QRF $k$. When $G_k=\rm{U}(1)$, $\ch_{ij\vert k}$ need not be a subspace of the kinematical factors $\ch_i\otimes\ch_j$~\cite{periodtrinity}. Furthermore, thanks to the redundancy in describing $\ch_{\rm phys}$, the reduction is invertible on physical states (but not on $\mathcal{H}_\mathrm{kin}$), so that $\ch_{ij\vert k}$ is isometric to $\ch_{\rm phys}$ \cite{vanrietvelde2020change,Vanrietvelde:2018dit,hohn2020switch,Hoehn:2018whn,hoehn2019trinity,HLSrelativistic,periodtrinity}. Hence, the algebraic properties of observables are preserved. This permits us to change QRF: the change from $k$ to $i$ takes the compositional form of a ``quantum coordinate transformation'', $\Lambda_{k\to i}\ce\calr_{i}\circ\calr^{-1}_{k}$, transforming both states and observables via the structure on $\ch_{\rm phys}$ which is \emph{a priori} neutral with respect to QRF perspectives~\cite{vanrietvelde2020change,Vanrietvelde:2018dit,hohn2020switch,Hoehn:2018whn,hoehn2019trinity,HLSrelativistic,periodtrinity}, see Fig.~\ref{ChangingRF}. The same physical situation, encoded in the perspective-neutral state $\ket{\psi_{\rm phys}}$, is thus described from different internal QRF perspectives. We shall now exploit this gauge-invariant, perspective-neutral framework to explain dependence of, first,  subsystem locality and correlations and, second, tensor factorizability on the choice of QRF.

\begin{figure}[t]
\includegraphics[width= 245pt]{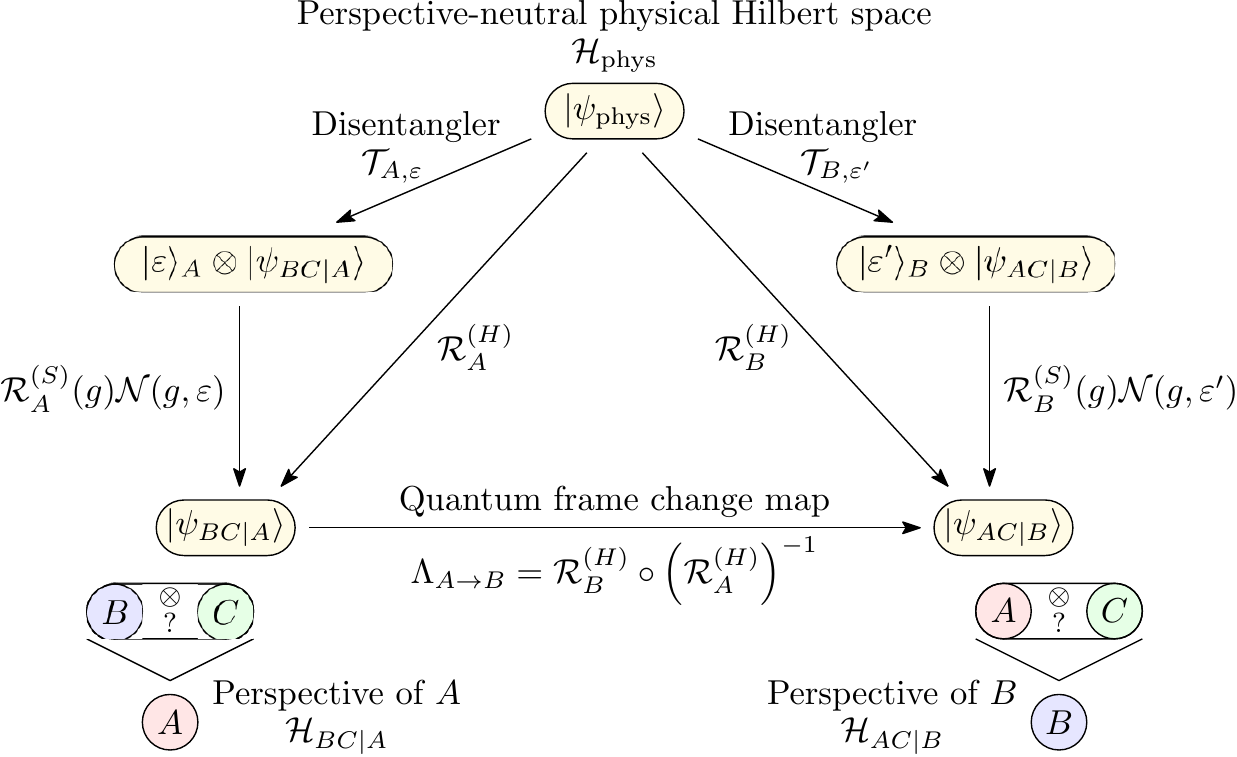}
\caption{The change from perspective $A$ to perspective $B$ takes the compositional form of a `quantum coordinate transformation', $\Lambda_{A\to B}\ce\calr_{B}^{(H)}\circ (\calr^{(H)}_{A})^{-1}$~ \cite{vanrietvelde2020change,Vanrietvelde:2018dit,hohn2020switch,Hoehn:2018whn,hoehn2019trinity,HLSrelativistic,periodtrinity}. This induces a transformation on the algebra observables from the perspective of $A$ to the perspective of $B$, namely $\Lambda_{A\to B} \mathcal{A}_{BC|A} \Lambda_{A\to B}^{-1} \subseteq \mathcal{A}_{AC|B} $, where $\mathcal{A}_{ij|k} \simeq \mathcal{B}(\mathcal{H}_{ij|k})$.  \label{ChangingRF}}
\end{figure}

\textit{Frame-dependent subsystems and correlations.}--- 
The QRF dependence of correlations has been observed in~\cite{Giacomini:2017zju}, with the conclusion that superposition in one frame manifests as entanglement in another frame. Later, the formalism for changing QRFs introduced in~\cite{Giacomini:2017zju} was shown to be equivalent to the frame-change map in Fig.~\ref{ChangingRF}~\cite{vanrietvelde2020change}, and the frame dependence of correlations was studied in a variety of contexts~\cite{Krumm:2020fws,vanrietvelde2020change,Castro-Ruiz:2019nnl,hoehn2019trinity,delaHamette:2020dyi}. Here, we use the perspective-neutral architecture to describe the QRF relativity of subsystems and correlations.

The Heisenberg-picture reduction illustrates why a fixed perspective-neutral state $\ket{\psi_{\rm phys}}$ generally leads to different properties, such as correlations, in $A$'s and $B$'s perspective (see Fig.~\ref{ChangingRF}): when going to $A$'s perspective, the now-redundant $A$ becomes kinematically disentangled from the non-redundant $B,C$, while $B$ becomes disentangled from $A,C$ when proceeding to $B$'s perspective. Which kinematical tensor factor in $\ket{\psi_{\rm phys}}$ is chosen as redundant and which as independent changes. In other words, the non-redundant (physical) information in $\ket{\psi_{\rm phys}}$ is shifted among different kinematical tensor factors when going to different QRF perspectives. Indeed, the wave function $\psi(c_i,c_j)$ will look different for different choices of $i,j$. As we shall see later, $\ch_{ij\vert k}$ may not even be factorizable across $i$ and $j$.

However, when there \emph{is} a physical tensor product structure this generically leads to different correlations in different frames. This can be understood by examining the observables that probe the respective tensor factorizations. Suppose the reduced physical Hilbert space $\ch_{BC|A} \simeq \ch_{B|A} \otimes \ch_{C|A}$ admits a tensor factorization across the subsystems $B,C$ from $A$'s perspective, induced from the original tensor product structure of $\ch_{\rm kin}$, and similarly that $\ch_{AC|B} \simeq \ch_{A|B} \otimes \ch_{C|B}$ so that we can consider entanglement across these subsystems. 
The subsystem physical Hilbert spaces $\ch_{i|k}$ may be different from their kinematical counter-parts $\ch_i$ of which they may \cite{hoehn2019trinity,HLSrelativistic} or may not be \cite{periodtrinity} subspaces.
We consider the algebra generated by local subsystem observables on $\ch_{ij|k}$, namely $\ca_{ij|k} \ce \ca_{i\vert k}\otimes\ca_{j\vert k}$, where $\ca_{i\vert k}\ce\cb(\ch_{i\vert k})$ (hence a type I factor), so that $[\ca_{i\vert k}\otimes\mathbf{1}_{j\vert k},\mathbf{1}_{i\vert k}\otimes\ca_{j\vert k}]=0$. 
Since $\ca_{ij\vert k}$ is dense in $\cb(\ch_{ij\vert k})$ with respect to the strong operator topology
\cite{takesaki1,parthasarathy2012introduction}, we can treat $\ca_{ij\vert k}$ for all practical purposes as the observable algebra of the tensor product space $\ch_{ij\vert k}$ (in finite dimensions the algebras are isomorphic).

Using the fact that $\calr_k$ is an invertible isometry, these observable algebras can be embedded into the algebra of relational Dirac observables $\ca_{\rm phys}\ce \cb(\ch_{\rm phys})$ as $\ca_{\rm phys}^{i\vert k}\ce\calr^{-1}_{k}\left(\ca_{i\vert k}\otimes\mathbf{1}_j\right)\calr_{k}\subset\ca_{\rm phys}$. This yields  $\big[\ca_{\rm phys}^{i\vert k},\ca_{\rm phys}^{j\vert k}\big]=0$.  Since the properties of $\calr_{k}$ thus imply that $\ca_{\rm phys}^{i\vert k}$ and $\ca_{\rm phys}^{j\vert k}$ are commuting type I factors, and that moreover $\calr^{-1}_{k} \,\ca_{ij\vert k}\,\calr_{k}=\ca_{\rm phys}^{i\vert k}\cdot\ca_{\rm phys}^{j\vert k}$ is dense in $\ca_{\rm phys}$, it follows  \cite{takesaki1}
that these factors induce a \emph{physical} tensor product on $\mathcal{H}_{\rm phys}\simeq\ch_{\rm phys}^{i\vert k}\otimes\ch_{\rm phys}^{j\vert k}$.  The following theorem shows this to be QRF-dependent in general (see Supplemental Material, also for the degenerate case). 
\begin{Theorem}\label{lem_1}
The algebra  $\mathcal{A}^{C|A}_{\rm phys}$ of relational observables of $C$ relative to $A$, is distinct from the algebra  $\mathcal{A}^{C|B}_{\rm phys}$  of relational observables of $C$ relative to $B$,  $\ca_{\rm phys}^{C\vert A}\neq\ca_{\rm phys}^{C\vert B}$.
\end{Theorem}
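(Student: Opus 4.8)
The plan is to argue by contradiction, exploiting the two complementary tensor factorizations of $\ch_{\rm phys}$ produced by the isometric isomorphisms $\calr_A$ and $\calr_B$. As recalled above, $\ca_{\rm phys}^{B\vert A}$ and $\ca_{\rm phys}^{C\vert A}$ are the two tensor-factor algebras of a factorization $\ch_{\rm phys}\simeq\ch_{\rm phys}^{B\vert A}\otimes\ch_{\rm phys}^{C\vert A}$, and likewise $\ca_{\rm phys}^{A\vert B}$, $\ca_{\rm phys}^{C\vert B}$ for a factorization of the \emph{same} Hilbert space; hence, within $\cb(\ch_{\rm phys})$, each of these pairs is a mutual commutant, $\big(\ca_{\rm phys}^{C\vert A}\big)'=\ca_{\rm phys}^{B\vert A}$ and $\big(\ca_{\rm phys}^{C\vert B}\big)'=\ca_{\rm phys}^{A\vert B}$. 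Consequently, assuming $\ca_{\rm phys}^{C\vert A}=\ca_{\rm phys}^{C\vert B}$ and taking commutants gives $\ca_{\rm phys}^{B\vert A}=\ca_{\rm phys}^{A\vert B}$, and it remains to derive a contradiction from this.

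The next step is to locate the constraint pieces $\hat C_A,\hat C_B,\hat C_C$ (restricted to $\ch_{\rm phys}$) inside these algebras. Each $\hat C_i$ commutes with $\hat C$, hence preserves $\ch_{\rm phys}$ and is itself a relational Dirac observable; viewed from the perspective of any frame $j\neq i$ it reduces to the ``$i$-charge'' $\hat C_{i\vert j}$ acting on the tensor factor $\ch_{i\vert j}$ alone, so that $\hat C_i\big\vert_{\ch_{\rm phys}}=\calr_j^{-1}\big(\hat C_{i\vert j}\otimes\mathbf 1\big)\calr_j\in\ca_{\rm phys}^{i\vert j}$. In particular $\hat C_A\big\vert_{\ch_{\rm phys}}\in\ca_{\rm phys}^{A\vert B}$ (taking $j=B$), $\hat C_B\big\vert_{\ch_{\rm phys}}\in\ca_{\rm phys}^{B\vert A}=\ca_{\rm phys}^{A\vert B}$ (taking $j=A$ and using the assumed identity), and $\hat C_C\big\vert_{\ch_{\rm phys}}\in\ca_{\rm phys}^{C\vert B}$. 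But the constraint reads $\hat C_A+\hat C_B+\hat C_C=0$ on $\ch_{\rm phys}$, so $\hat C_C\big\vert_{\ch_{\rm phys}}=-\hat C_A\big\vert_{\ch_{\rm phys}}-\hat C_B\big\vert_{\ch_{\rm phys}}$ also lies in $\ca_{\rm phys}^{A\vert B}$; being simultaneously in $\ca_{\rm phys}^{C\vert B}=\big(\ca_{\rm phys}^{A\vert B}\big)'$, it lies in the center of the type~I factor $\ca_{\rm phys}^{A\vert B}$, hence is a multiple of the identity. This is the contradiction: by non-degeneracy $\hat C_C$ has at least two distinct eigenvalues on $\ch_C$, a property inherited by its reduction $\hat C_{C\vert B}$ and therefore by $\hat C_C\big\vert_{\ch_{\rm phys}}$, so that the latter is \emph{not} scalar — provided $C$ is a genuine subsystem in $B$'s perspective, $\dim\ch_{C\vert B}\ge2$, as presumed throughout. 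Hence $\ca_{\rm phys}^{C\vert A}\neq\ca_{\rm phys}^{C\vert B}$.

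The step I expect to be delicate, and which I would treat carefully (deferring edge cases to the Supplemental Material), is exactly certifying that $\hat C_C\big\vert_{\ch_{\rm phys}}$ is non-scalar — equivalently, that $\hat C_{C\vert B}$ retains at least two distinct eigenvalues once the spectral overlap condition $\sigma_{AC\vert B}$ is imposed — together with the analogous bookkeeping when the $\hat C_i$ are degenerate. An alternative that makes the argument more concrete and sidesteps the commutant manipulation is to exhibit a witness directly: since $\hat C_C\big\vert_{\ch_{\rm phys}}$ is a non-central element of the factor $\ca_{\rm phys}^{C\vert A}\simeq\cb(\ch_{\rm phys}^{C\vert A})$, there is $\hat W\in\ca_{\rm phys}^{C\vert A}$ with $[\hat W,\hat C_C\big\vert_{\ch_{\rm phys}}]\neq0$ — for instance a relational Dirac observable of a non-scalar operator on $C$ whose value is read off at a non-trivial orientation of frame $A$. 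Using $\hat C_A=-\hat C_B-\hat C_C$ on $\ch_{\rm phys}$ together with $[\ca_{\rm phys}^{C\vert A},\ca_{\rm phys}^{B\vert A}]=0$ gives $[\hat W,\hat C_A\big\vert_{\ch_{\rm phys}}]=-[\hat W,\hat C_C\big\vert_{\ch_{\rm phys}}]\neq0$, whereas every element of $\ca_{\rm phys}^{C\vert B}$ commutes with $\hat C_A\big\vert_{\ch_{\rm phys}}\in\ca_{\rm phys}^{A\vert B}=\big(\ca_{\rm phys}^{C\vert B}\big)'$; hence $\hat W\notin\ca_{\rm phys}^{C\vert B}$, and the two algebras differ.
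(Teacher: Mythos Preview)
Your proposal is correct, and the ``alternative'' witness argument you sketch at the end is essentially the paper's own proof: the paper exhibits an explicit element $\hat F_{f_C\otimes\mathbf{1}_B,A}\in\ca_{\rm phys}^{C\vert A}$ (built from some $\hat f_C$ that is not a constant of motion) and, using the weak algebra homomorphism $\hat f_i\mapsto\hat F_{f_i,k}$ together with $\hat F_{C_C,A}=\mathbf{1}_A\otimes\hat C_C$ and the constraint, shows that this element fails to commute with $\hat C_A\in\ca_{\rm phys}^{A\vert B}$, while everything in $\ca_{\rm phys}^{C\vert B}$ does. Your step $[\hat W,\hat C_A]=-[\hat W,\hat C_C]$ via $[\hat W,\hat C_B]=0$ is precisely the paper's use of Eq.~\eqref{commute}.

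Your primary contradiction-via-commutants argument is a genuinely different packaging of the same ingredients. Rather than exhibiting a witness, you assume equality, pass to commutants to obtain $\ca_{\rm phys}^{B\vert A}=\ca_{\rm phys}^{A\vert B}$, and then force $\hat C_C\big\vert_{\ch_{\rm phys}}$ into the trivial center of a type~I factor. This is slicker in that it never names a specific observable, but it leans more heavily on the commutant formalism (that $\ca_{\rm phys}^{B\vert A}$ and $\ca_{\rm phys}^{C\vert A}$ really are each other's commutants in $\cb(\ch_{\rm phys})$), which the paper sets up only informally via density. The paper's route, by contrast, produces a concrete witness through the power-series form of relational observables, which is more explicit and transfers directly to the degenerate case where one works superselection-sector-wise. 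The non-triviality caveat you flag --- that $\hat C_C\big\vert_{\ch_{\rm phys}}$ be non-scalar --- is exactly the hypothesis the paper also invokes when it requires ``$\hat C_i\big|_{d_k}$ is non-trivial'' so that non-constant-of-motion $\hat f_i$ exist; the two conditions are equivalent.
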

However, note that $\ca^{C\vert A}_{\rm phys}$ is isomorphic to $\ca^{C\vert B}_{\rm phys}$ when $\ch^{C\vert A}_{\rm phys}\simeq\ch^{C\vert B}_{\rm phys}$. The two tensor factorizations $\ch_{\rm phys}\simeq \ch_{\rm phys}^{B\vert A}\otimes\ch_{\rm phys}^{C\vert A}$ and $\ch_{\rm phys}\simeq \ch_{\rm phys}^{A\vert B}\otimes\ch_{\rm phys}^{C\vert B}$ therefore constitute \emph{different} physical tensor factorizations. It is therefore clear that a given physical state $\ket{\psi_{\rm phys}} \in \mathcal{H}_{\rm phys}$ exhibits different correlations in the two different factorizations.

Put differently, transforming the algebra $\ca_{C\vert A}$ from $A$'s to $B$'s perspective, does \emph{not} yield $C$'s algebra relative to $B$, $\Lambda_{A\to B} \left(\mathbf{1}_B\otimes\ca_{C\vert A}\right)\Lambda_{A\to B}^{-1}\neq\mathbf{1}_A\otimes\ca_{C\vert B}$. The tensor factorization between $B,C$ relative to $A$ thus does not map under QRF transformations into the tensor factorization between $A,C$ relative to $B$. Instead, since $\Lambda_{A\to B}$ is an invertible isometry too, it maps into a different tensor factorization relative to $B$, namely one between combinations of $A$ and $C$ degrees of freedom. Consequently, \emph{the notion of subsystem locality is QRF dependent}, as are the correlations inherited from a given physical state. We illustrate this observation in an example in the Supplemental Material.

\textit{When are kinematical subsystems physical?---}  
As noted above, it is in general not the case that the reduced physical Hilbert space and the observable algebra can be factorized into the same subsystems that constitute tensor factors of the  kinematical Hilbert space. In other words, imposing a given symmetry can remove the distinction between what might have been expected to be physical subsystems.

To see this, note that one can also obtain the reduced physical Hilbert space from $k$'s perspective \emph{directly} from the kinematical $i,j$ tensor factors via the (possibly improper) projector $\Pi_{\sigma_{ij \vert k}}:\mathcal{H}_{i} \otimes \mathcal{H}_{j}\to\ch_{ij|k}$ given by~\cite{hoehn2019trinity,HLSrelativistic,periodtrinity}
\begin{equation} \label{ePhysicalProjectorBipartite}
\begin{aligned}
    \Pi_{\sigma_{ij \vert k}} \ce \:\: \intsum_{\:\:\: c_i ,c_j \vert c_i + c_j \in \sigma_{ij \vert k}} \ket{c_{i}} \bra{ c_{i}}_{i} \otimes \ket{c_{j}} \bra{c_{j}}_{j} .
\end{aligned}
\end{equation}
Observe that $\sigma_{ij\vert k}$ is symmetric in $i$ and $j$ but not in $i$ and $k$. Furthermore, defining $\sigma_i \ce  \mathrm{spec} ( \hat{C}_{i} )$, note that $\sigma_{ij \vert k} = \spec (\hat{C}_{i} + \hat{C}_{j})$ whenever $\sigma_{k}=\mathbb{R}$. The projector $\Pi_{\sigma_{ij\vert k}}$ is improper if $\sigma_{ij\vert k}$ is discrete, while at least one of $\hat C_i,\hat C_j$ has continuous spectrum \cite{periodtrinity}. The reduced physical Hilbert space $\ch_{ij|k}$ factorizes into $i$ and $j$ subsystems $\ch_{i|k}$ and $\ch_{j|k}$ if and only if $\Pi_{\sigma_{ij\vert k}}$ does as well. This is only the case  if (see Supplemental Material for proof)
\begin{equation} \label{eCondition}
\sigma_{ij \vert k} = \mathrm{M} \left( \sigma_{i\vert jk} , \sigma_{j\vert ik} \right) ,
\end{equation}
where $\sigma_{i\vert jk} \ce \sigma_{i} \cap \spec \left(-\hat{C}_{j}-\hat{C}_{k} \right)$ is the subset of $\sigma_{i}$ compatible with the constraint equation, and where $\mathrm{M}(\cdot,\cdot)$ denotes Minkowski addition, defined by $\mathrm{M}(X,Y) \ce \lbrace x+y \vert x\in X, y\in Y \rbrace$. When this is satisfied, then the projector defined in Eq.~\eqref{ePhysicalProjectorBipartite} becomes
\begin{equation} 
\Pi_{\sigma_{ij \vert k}} = \left( ~~ \intsum_{\:\:\: c_{i} \in \sigma_{i\vert jk}} \ket{c_{i}}\bra{c_{i}}_i \right)  \otimes \left ( ~~ \intsum_{\:\:\: c_{j} \in \sigma_{j\vert ik}} \ket{c_{j}}\bra{c_{j}}_j \right),  
\end{equation}
and thus $\Pi_{\sigma_{ij \vert k}} \left( \mathcal{H}_i \otimes \mathcal{H}_j \right)  = \mathcal{H}_{i\vert k} \otimes \mathcal{H}_{j\vert k}$. Here, $\mathcal{H}_{i\vert k}  \subseteq \mathcal{H}_i$, unless $\hat C_k$ has discrete and $\hat C_i$ continuous spectrum \cite{hoehn2019trinity,HLSrelativistic,periodtrinity} (likewise for $\ch_{j\vert k}$).  Note that when $\sigma_{i\vert jk} = \sigma_i$, then $\ch_{i\vert k} =\mathcal{H}_i$. This holds for both $i$ and $j$ if $\sigma_k =\mathds{R}$. We give an example of non-factorizability of the physical Hilbert space in the Supplemental Material.
%there exists a $\tilde{\sigma}_i \subseteq \sigma_i$ (likewise $\tilde{\sigma}_j$) such that 
%\begin{equation} \label{eCondition}
%\sigma_{ij \vert k} = \mathrm{M} \left(\tilde{\sigma}_i , \tilde{\sigma}_j \right) ,
%\end{equation}
%where $\mathrm{M}(\cdot,\cdot)$ denotes Minkowski addition, defined by $\mathrm{M}(X,Y) \ce \lbrace x+y \vert x\in X, y\in Y \rbrace$. When this is satisfied, then the projector defined in Eq.~\eqref{ePhysicalProjectorBipartite} becomes
%\begin{equation}
%\Pi_{\sigma_{ij \vert k}} = \left( ~~ \intsum_{\:\:\: c_{i} \in \tilde{\sigma}_{i}} \ket{c_{i}}\bra{c_{i}}_i \right)  \otimes \left ( ~~ \intsum_{\:\:\: c_{j} \in \tilde{\sigma}_{j}} \ket{c_{j}}\bra{c_{j}}_j \right), 
%\end{equation}
%and thus $\Pi_{\sigma_{ij \vert k}} \left( \mathcal{H}_i \otimes \mathcal{H}_j \right)  = \mathcal{H}_{i\vert k} \otimes \mathcal{H}_{j\vert k}$. Here, $\mathcal{H}_{i\vert k}  \subseteq \mathcal{H}_i$, unless $\hat C_k$ has discrete and $\hat C_i$ continuous spectrum \cite{hoehn2019trinity,HLSrelativistic,periodtrinity} (likewise for $\ch_{j\vert k}$).  Note that when $\tilde{\sigma}_i = \sigma_i$, then $\ch_{i\vert k} =\mathcal{H}_i$. This holds for both $i$ and $j$ if $\sigma_k =\mathds{R}$. We give an example of non-factorizability of the phyiscal Hilbert space in the supplemental material.

To understand this more explicitly, let $\hat{A}_{i} \otimes \hat{A}_{j}$ be a kinematical basis element of $\mathcal{B}(\mathcal{H}_{i}) \otimes \mathcal{B}(\mathcal{H}_{j})$. Condition~\eqref{eCondition} must be satisfied in order for the physical representation of this operator from $k$'s perspective to factorize across $i$ and $j$. Otherwise, the degrees of freedom of $i$ and $j$ become combined indivisibly into $\Pi_{\sigma_{ij \vert k}} ( \hat{A}_{i}\otimes\hat{A}_{j} )$. This includes the case when $\hat{A}_i \otimes \hat{A}_j$ is diagonal in the eigenbases of $\hat C_i$ and $\hat C_j$, and thus commutes with $\Pi_{\sigma_{ij\vert k}}$ (see Supplemental Material). Specifically, kinematical $i$ subsystem observables of the form $\hat A_i\otimes\mathbf{1}_j$ will \emph{not} translate into a product form on $\ch_{ij\vert k}$.  In fact (see Supplemental Material): 
\begin{Theorem}\label{thm_2}
There exist $\hat A_i\otimes\mathbf{1}_j$ and $\mathbf{1}_i\otimes\hat A_j$ in $\cb(\ch_i)\otimes\cb(\ch_j)$ whose images under $\Pi_{\sigma_{ij\vert k}}$ in $\cb(\ch_{ij\vert k})$ do \emph{not} commute unless condition~\eqref{eCondition} is met.
\end{Theorem}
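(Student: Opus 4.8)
\emph{Proof sketch.} The plan is to reduce the claim to a combinatorial statement about the support set $S \ce \{ (c_i,c_j) \in \sigma_i \times \sigma_j \mid c_i + c_j \in \sigma_{ij\vert k} \}$, so that $\Pi_{\sigma_{ij\vert k}}$ in Eq.~\eqref{ePhysicalProjectorBipartite} is the projector onto $\mathrm{span}\{\ket{c_i}_i\otimes\ket{c_j}_j \mid (c_i,c_j)\in S\}$. Write $\tilde A_i,\tilde A_j$ for the compressions $\Pi_{\sigma_{ij\vert k}}(\hat A_i\otimes\mathbf{1}_j)\Pi_{\sigma_{ij\vert k}}$ and $\Pi_{\sigma_{ij\vert k}}(\mathbf{1}_i\otimes\hat A_j)\Pi_{\sigma_{ij\vert k}}$, regarded as operators on $\ch_{ij\vert k}$. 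In the joint eigenbasis of $\hat C_i,\hat C_j$ one has $\bra{c_i,c_j}\tilde A_i\ket{c_i',c_j'}=\bra{c_i}\hat A_i\ket{c_i'}\,\delta_{c_jc_j'}$ and $\bra{c_i,c_j}\tilde A_j\ket{c_i',c_j'}=\delta_{c_ic_i'}\bra{c_j}\hat A_j\ket{c_j'}$, each nonzero only when \emph{both} labels lie in $S$; thus $\tilde A_i$ ``hops horizontally within $S$'' and $\tilde A_j$ ``hops vertically within $S$''. Composing the two orderings,
\begin{equation}
\bra{a_i,b_j}[\tilde A_i,\tilde A_j]\ket{a_i',b_j'}=\bra{a_i}\hat A_i\ket{a_i'}\,\bra{b_j}\hat A_j\ket{b_j'}\,(\chi_1-\chi_2),
\end{equation}
where $\chi_1=1$ iff $\{(a_i,b_j),(a_i',b_j),(a_i',b_j')\}\subseteq S$ and $\chi_2=1$ iff $\{(a_i,b_j),(a_i,b_j'),(a_i',b_j')\}\subseteq S$: the two orderings route the transition $(a_i,b_j)\to(a_i',b_j')$ through the two distinct remaining vertices of the rectangle spanned by these points, and $\Pi_{\sigma_{ij\vert k}}$ can suppress one route while keeping the other.

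The ``unless'' half is then immediate: if condition~\eqref{eCondition} holds, $\Pi_{\sigma_{ij\vert k}}$ factorizes as a product of projectors onto $\ch_{i\vert k}$ and $\ch_{j\vert k}$, whence $\tilde A_i=(\Pi_{i\vert k}\hat A_i\Pi_{i\vert k})\otimes\mathbf{1}_{j\vert k}$ and $\tilde A_j=\mathbf{1}_{i\vert k}\otimes(\Pi_{j\vert k}\hat A_j\Pi_{j\vert k})$ act on complementary tensor factors of $\ch_{ij\vert k}=\ch_{i\vert k}\otimes\ch_{j\vert k}$ and hence commute for every $\hat A_i,\hat A_j$. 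For the converse I would argue the contrapositive. By the same bookkeeping that yields the factorization criterion, the coordinate projections of $S$ are $\pi_i(S)=\sigma_{i\vert jk}$ and $\pi_j(S)=\sigma_{j\vert ik}$ (this is essentially the definition of ``compatible with the constraint''), and \eqref{eCondition} is equivalent to $S=\pi_i(S)\times\pi_j(S)$. If \eqref{eCondition} fails, $S$ is hence a \emph{proper} subset of $\pi_i(S)\times\pi_j(S)$, so there is a ``missing point'' $(p,q)\notin S$ with $p\in\pi_i(S)$, $q\in\pi_j(S)$, and by definition of the projections there exist $q'$ with $(p,q')\in S$ and $p'$ with $(p',q)\in S$ (necessarily $p\neq p'$, $q\neq q'$). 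Once one also secures $(p',q')\in S$, the triple $\{(p,q'),(p',q'),(p',q)\}$ is an ``$L$'' inside $S$ whose fourth rectangle vertex $(p,q)$ is absent. Taking the Hermitian operators $\hat A_i=\ketbra{p}{p'}+\ketbra{p'}{p}$ on $\ch_i$ and $\hat A_j=\ketbra{q}{q'}+\ketbra{q'}{q}$ on $\ch_j$ — replaced by bounded shift-type operators (e.g.\ $e^{\pm i s\hat q_i}$ built from a variable conjugate to $\hat C_i$) when $\hat C_i$ or $\hat C_j$ has continuous spectrum — the displayed formula gives $\bra{p,q'}[\tilde A_i,\tilde A_j]\ket{p',q}=\chi_1-\chi_2=1-0\neq 0$, so the two images do not commute.

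The step I expect to be the real obstacle is exactly the proviso ``one also secures $(p',q')\in S$'', i.e.\ upgrading the bare failure of \eqref{eCondition} to the presence of an $L$-triple somewhere in $S$: a priori the missing point could be forced to pair diagonally with a second missing point, so that no $2\times2$ grid carries an $L$. This can occur only when $\sigma_{ij\vert k}$ is closed under the parallelogram relation $(x,y,z)\mapsto x-y+z$ — for instance a coset of a subgroup of $\mathbb{R}$ — and the argument must rule this out using the standing assumptions, namely that each $\hat C_i$ generates $\mathbb{R}$ or $\mathrm{U}(1)$ and that in the cases where \eqref{eCondition} genuinely fails the relevant spectra are not of this exceptional ``arithmetically closed'' type; the borderline and the degenerate cases are treated separately. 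The remaining points — bounding the otherwise rank-one $\hat A_i,\hat A_j$ in the continuous-spectrum case, and passing from a single nonvanishing matrix element to non-commutativity of the operators — are routine.
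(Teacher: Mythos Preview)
Your approach is essentially identical to the paper's: both expand $\hat A_{i\vert k},\hat A_{j\vert k}$ in the joint $\hat C_i,\hat C_j$ eigenbasis and arrive at the commutator formula whose only obstruction to vanishing is the difference between the two ``rectangle routes''. The paper's Lemma~2 writes out exactly your $\chi_1-\chi_2$ structure (its Eq.~(\ref{aacom})) and then simply asserts that, when condition~\eqref{eCondition} fails, the two summing conditions disagree for some quadruple $(c_i,c_i',c_j,c_j')$ --- i.e.\ an $L$-triple exists. So the step you single out as ``the real obstacle'' is precisely the step the paper passes over without argument; you have not missed a trick that the paper uses.

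Your instinct that this step needs justification is correct, but the patch you propose --- ruling out ``arithmetically closed'' $\sigma_{ij\vert k}$ via the standing hypothesis that each $\hat C_i$ generates $\mathbb{R}$ or $\mathrm{U}(1)$ --- does not work. Take $\sigma_A=\sigma_B=\mathbb{Z}$ and $\sigma_C=2\mathbb{Z}$, all nondegenerate, each generating a $\mathrm{U}(1)$. Then $\sigma_{AB\vert C}=2\mathbb{Z}$, while $\sigma_{A\vert BC}=\sigma_{B\vert AC}=\mathbb{Z}$, so $\mathrm{M}(\sigma_{A\vert BC},\sigma_{B\vert AC})=\mathbb{Z}\neq 2\mathbb{Z}$ and condition~\eqref{eCondition} fails. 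Yet for any $(m,n),(m',n')\in S$ one has $(m'+n)+(m+n')=(m+n)+(m'+n')\in 2\mathbb{Z}$, so $m'+n$ and $m+n'$ have the same parity and $\chi_1=\chi_2$ identically; hence $[\tilde A_i,\tilde A_j]=0$ for \emph{all} $\hat A_i,\hat A_j$, and no $L$-triple exists anywhere in $S$. In this example $\Pi_{\sigma_{AB\vert C}}$ does not factorize as a product of projectors onto subspaces of $\ch_A$ and $\ch_B$, but $\ch_{AB\vert C}$ splits as a direct sum of two rectangles (even--even and odd--odd sectors), and the compressions are block-diagonal and of product form in each block.

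So the gap you flagged is genuine, the paper's proof shares it, and the standing assumptions do not close it. A clean statement consistent with your (and the paper's) argument is: condition~\eqref{eCondition} is sufficient for all compressed $i$- and $j$-locals to commute, and is necessary outside the exceptional ``coset'' situation in which $S$ is a disjoint union of product sets --- equivalently, in which $\sigma_{ij\vert k}$ is closed under $(x,y,z)\mapsto x-y+z$ on the relevant fibres.
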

By linearity, these conclusions extend to an arbitrary element of $\mathcal{B}(\mathcal{H}_{i}) \otimes \mathcal{B}(\mathcal{H}_{j})$. Consequently, when condition~\eqref{eCondition} is not satisfied, the algebra of observables loses its distinction between  parties $i$ and $j$  from the perspective of $k$'s reference frame. 

\textit{Frame-dependent factorizability.---} 
When Eq.~\eqref{eCondition} holds in one frame but not in another, the preservation of the kinematical factorization on $\mathcal{H}_{ij\vert k}$ likewise depends on the frame, as we now illustrate. For concreteness, consider any constraint such that $\sigma_\mathrm{A}=\mathds{R}_{+}$, $\sigma_\mathrm{B} =\mathds{R}_{+}$, and $\sigma_\mathrm{C} =\mathds{R}$. Considering first $C$'s perspective, we have that $\sigma_{\mathrm{AB} \vert \mathrm{C}}=\mathds{R}_{+}=\mathrm{M}(\mathds{R}_{+},\mathds{R}_{+})$, \textit{i.e.}\ condition~\eqref{eCondition} is satisfied, with $\sigma_{\mathrm{A}\vert\mathrm{BC}}=\sigma_\mathrm{A}$ (likewise $\sigma_{\mathrm{B}\vert\mathrm{AC}}$), and $\Pi_{\sigma_{\mathrm{AB} \vert \mathrm{C}}} ( \mathcal{H}_\mathrm{A} \otimes \mathcal{H}_\mathrm{B} ) = \mathcal{H}_\mathrm{A} \otimes \mathcal{H}_\mathrm{B}$. From $B$'s perspective, on the other hand, one can prove by contradiction that condition~\eqref{eCondition} is not satisfied (see Supplemental Material), and therefore the reduced physical Hilbert space does not factor into $A$ and $C$ parts. This latter fact, does not however imply that there exists no tensor factorization of $\ch_{AC|B}$. Indeed, one can use the tensor factorization of $\ch_{AB|C}$ to construct one on $\ch_{AC|B}$ via the frame-change map $\Lambda_{C\to B}$, as in our discussion of frame-dependent correlations above. In this case the algebra of local observables from $C$'s perspective, namely $\ca_{A\vert C}\otimes\ca_{B\vert C}$, maps to a tensor factorization between \emph{combinations} of $A$ and $C$ degrees of freedom, and therefore does not correspond to a partitioning into subsystems $A$ and $C$. 

As a particular example, consider a reparameterization-invariant system consisting of two free (non-relativistic) unit-mass particles, $A$ and $B$, and an ideal clock $C$ (\textit{i.e.}\ one whose Hamiltonian is equivalent to a momentum operator~\cite{pauli1958allgemeinen,srinivas1981time,woods2019autonomous,hohn2020switch,Smith:2017pwx,Castro-Ruiz:2019nnl}). This corresponds to the constraint 
\begin{equation} \label{eTwoPartsOnePerf}
\hat{C} = \frac{\hat{p}_{A}^{2}}{2} + \frac{\hat{p}_{B}^{2}}{2} + \hat{p}_{C} .
\end{equation}
Thus $\sigma_\mathrm{A}=\mathds{R}_{+}$, $\sigma_\mathrm{B} =\mathds{R}_{+}$, as above, and therefore the distinction between kinematical subsystems survives on $\ch_{AB|C}$, but not on $\ch_{AC|B}^{d_B}$, where $d_B=\pm1$ labels the degeneracy of $\hat p_B^2$ (see Supplemental Material for a discussion of degeneracies). In the Supplemental Material, we illustrate this by examining how the kinematical canonical pairs $(\hat{x}_{i},\hat{p}_{i})$ on $\mathcal{H}_{i}$ and $(\hat{x}_{j},\hat{p}_{j})$ on $\mathcal{H}_{j}$ appear from $B$'s and $C$'s perspectives. We show that their commutation relations are preserved on $\ch_{AB|C}$, but not on $\ch_{AC|B}^{d_B}$, where they yield mutually non-commuting canonical pairs, in line with Theorem~\ref{thm_2}, explaining the absence of a tensor factorization across $A$ and $C$ relative to $B$. The same conclusion holds for the corresponding relational observables on $\ch_{\rm phys}$. This further highlights the distinction between local observables on $\mathcal{H}_\mathrm{kin}$ and the observables in a given physical reference frame. 

Another example of the above class of constraints is obtained by replacing systems $A$ and $B$ in Eq.~\eqref{eTwoPartsOnePerf} with the nondegenerate Hamiltonian $\hat H=\hat{p}^2/2m+a_1\,e^{a_2\,\hat{q}}$ with $a_1 , a_2 >0$. This observation can also be easily extended to the case when $A,B$ are harmonic oscillators.

\textit{Discussion and conclusions.\,---\,} 
We have established a gauge invariant and quantum frame dependent notion of subsystems, locality and correlations using relational observables. We have exploited the perspective-neutral approach to QRF covariance \cite{vanrietvelde2020change,Vanrietvelde:2018dit,hohn2020switch,Hoehn:2018whn,hoehn2019trinity,HLSrelativistic,periodtrinity,Krumm:2020fws,Giacomini:2020ahk,Giacomini:2021gei,Castro-Ruiz:2019nnl}, showing algebraically how different QRF choices necessarily induce distinct tensor factorizations of the physical Hilbert space when the latter admits such structures. Further, we have identified the necessary and sufficient condition for QRF perspectives to inherit the kinematical partitioning of subsystems. Specifically, we have illustrated that the kinematical definition of subsystems may survive in some QRF perspectives, but dissolve in others.

The ensuing QRF dependence of subsystems and entanglement is a particular realization of the proposal for an observer-dependent notion of generalized entanglement put forward in \cite{barnum_generalizations_2003,barnum_subsystem-independent_2004,viola_barnum_2010} in terms of relational observables and QRFs. It is also related to the generic feature of the dependence of entanglement on classical coordinate choices~\cite{PhysRevLett.98.080406}. However, here, as in previous work on QRFs \cite{Giacomini:2017zju,vanrietvelde2020change,hoehn2019trinity,Castro-Ruiz:2019nnl,delaHamette:2020dyi,Krumm:2020fws}, specific choices of coordinates are associated to the internal perspectives of different systems to form quantum reference systems and these may be in superpositions of ``orientations''.  This provides a physical interpretation to the coordinate choices, and in turn the \emph{quantum} relativity of subsystems and entanglement.

Note that the notion of QRF here is physically distinct from one sometimes used in the context of quantum information theory~\cite{bartlett_reference_2007,Palmer:2013zza,smith_quantum_2016,loveridge_relativity_2017,loveridge_symmetry_2018,smith_communicating_2019} (see, e.g.\ the discussion in~\cite{Krumm:2020fws}), where entanglement must be operationally defined relative to observables that are independent of the choice of an external frame not shared by two parties, for example by appending an ancilla system. This approach also resonates with the proposal in \cite{barnum_generalizations_2003,barnum_subsystem-independent_2004,viola_barnum_2010}, but does not involve the adoption of an internal perspective relative to a subsystem through the reduction maps employed here. In particular, the aim in that context is to define an  \emph{external}-frame-independent notion of entanglement, in contrast with our investigation of an \emph{internal}-frame-dependent entanglement.

In the relativistic case, the entanglement between spin and momentum degrees of freedom for relativistic particles \cite{Peres:2002ip,Peres:2002wx}, as well as the  momentum mode decomposition in quantum field theory \cite{birrell1984quantum}, leading to the Unruh effect \cite{Unruh:1976db}, Hawking radiation \cite{Hawking:1974rv}, and particle creation due to the expansion of the Universe \cite{Parker:1968mv}, are also dependent on the choice of spacetime frame. In contrast to QRFs, these coordinate frames are not associated to dynamically evolving quantum systems, but are idealised non-interacting classical entities external to the physics being considered. 

The frame dependence of factorizability  demonstrated in this letter implies that frameworks for general physical theories which take system composition as a primitive concept~\cite{2001quant.ph..1012H,PhysRevA.75.032304,Chiribella_2010,Gogioso_2018,Mueller:2020yrl,Hoehn:2014uua} are not currently able to describe fully general physical scenarios with multiple frames. 

Finally, it will be fruitful to connect our observations with the currently widely explored notions of local subsystems and entanglement in gauge theories and gravity  \cite{Donnelly:2016auv,Donnelly:2016rvo,Donnelly:2017jcd,Giddings:2018koz,Freidel:2020xyx,Geiller:2019bti,Gomes:2018shn,Gomes:2019xto,Wieland:2017cmf,Wieland:2017zkf,Jacobson:2019gnm,Wong:2017pdm}. For example, defining local subsystems in gravity non-perturbatively in terms of commuting subalgebras of relational observables can complement the perturbative investigation of subsystems in terms of dressed observables in \cite{Donnelly:2016rvo,Donnelly:2017jcd,Giddings:2018koz}. Conversely, the possible non-factorizability of the physical Hilbert space observed here calls for a revision of the notion of subsystems. This question seems to be related to the construction of entangling products using edge modes and extended Hilbert spaces in gauge theories \cite{Geiller:2019bti,Wong:2017pdm,Donnelly:2011hn,Buividovich:2008gq,Casini:2013rba,Delcamp:2016eya}.

\textit{Acknowledgments.\,---\,} PAH thanks Isha Kotecha, Fabio Mele and Markus M\"uller for helpful discussions and is grateful for support from the
Foundational Questions Institute through Grant number
FQXi-RFP-1801A. MPEL acknowledges financial support by the ESQ (Erwin Schrödinger Center for Quantum Science \& Technology) Discovery programme, hosted by the Austrian Academy of Sciences (ÖAW). This work was supported in part by funding from  Okinawa Institute of Science and Technology Graduate University. This research was supported by Perimeter Institute for Theoretical Physics. Research at Perimeter Institute is supported by the Government of Canada through the Department of Innovation, Science and Economic Development Canada and by the Province of Ontario through the Ministry of Research, Innovation and Science. This work was supported in part by a Kaminsky Undergraduate Research Award. ARHS wishes to acknowledge many productive and enjoyable conversations with Lorenza Viola on applications of generalized entanglement in various relational contexts. ARHS is grateful for support from Saint Anselm College and Dartmouth College.

%\vspace*{1mm}
 \textit{Contributions.\,---\,} P. A. H. proved theorems 1 and 2. M. P. E. L. established the condition relating kinematical and physical factorization. All authors contributed actively through discussions and revisions of technical aspects, as well as in the preparation of the manuscript.

%========================================
%========================================

%========================================
%========================================

\bibliography{RelFact}

%========================================
%========================================
%========================================

\onecolumngrid

\pagebreak
%========================================

\section{Supplemental Material}

\subsection{From physical states to Quantum Reference Frames (QRFs), including degeneracy}
 
Here we describe how the formalism introduced in the main text generalizes to the case where any of the terms in the constraint $\hat{C}=\hat{C}_\mathrm{A}+\hat{C}_\mathrm{B}+\hat{C}_\mathrm{C}$ may have an (eigenvalue-independent) degeneracy. For example, in relativistic cases, the degeneracy sectors may correspond to positive and negative frequency modes \cite{hohn2020switch,Hoehn:2018whn,HLSrelativistic,Smith:2019imm,Chataignier:2020fys}. In the following sections, all proofs of statements in the main text will be given in a form including the possibility of degeneracies. 

First note that solving $\hat C\,\ket{\psi_{\rm phys}}=0$ for reference system $k$ yields physical states in the form $\ket{\psi_{\rm phys}}=\sum_{d_k}\ket{\psi_{\rm phys}^{d_k}}$ \cite{hoehn2019trinity,HLSrelativistic,periodtrinity}, with
\ba \label{ePhysicalStatesDeg}
\ket{\psi_{\rm phys}^{d_k}}=\sum_{d_i,d_j}\:\: \intsum_{\:\:\: c_i+c_j\in\sigma_{ij\vert k}}\psi^{d_k}_{d_i,d_j}(c_i,c_j) \ket{-c_i-c_j,d_k}_k\otimes\ket{c_i,d_i}_i\otimes\ket{c_j,d_j}_j ,
\ea
for some $\psi^{d_k}_{d_i,d_j}(c_i,c_j)$, where $d_i$ is a degeneracy label for $\hat C_i$. The physical Hilbert space thus decomposes into superselection sectors $\ch_{\rm phys}=\bigoplus_{d_k}\ch_{\rm phys}^{d_k}$~\cite{marolfRefinedAlgebraicQuantization1995,Giulini:1998rk,HLSrelativistic}. The orientation states of the reference frame, introduced in Eq.~\eqref{eOrientStates}, transform covariantly under the group $G_k$ generated by $\hat{C}_{k}$, \textit{i.e.}\ $\exp(-i g'\hat C_k)\ket{g,d_k}_k=\ket{g+g',d_k}_k$. They can therefore be rewritten here as~\cite{hoehn2019trinity,HLSrelativistic,periodtrinity,holevoProbabilisticStatisticalAspects1982,buschOperationalQuantumPhysics,braunsteinGeneralizedUncertaintyRelations1996,Smith:2017pwx,Smith:2019imm}
\ba \label{eOrientStatesDeg}
\ket{g,d_k}_k\ce \:\: \intsum_{\:\:\:c_k}e^{i(\theta(c_k)-c_k g)}\ket{c_k,d_k}_k,
\ea
where $\theta(c_k)$ is an arbitrary phase function. Taken together, the set of all orientation states define a positive operator-valued measure (POVM) over all degeneracy sectors 
\begin{equation}
    E_k(X)\ce\int_{X\subset G_k}E_k(dg) \qquad \text{with} \qquad E_k(dg)=\mu\sum_{d_k}dg\ket{g,d_k}\!\bra{g,d_k},
\end{equation}
where $X$ is a Borel subset of $G_k$ and $\mu$ is a normalization constant such that $\int_G \, E_k(dg) = \mathbf{1}$; if $G_k=\mathbb{R}$ or $G_k=\mathrm{U}(1)$, we have $\mu=1/2\pi$ and $\mu=1/t_\mathrm{max}$, respectively, where $t_\mathrm{max}$ is the period of the $\rm{U}(1)$ representation \cite{hoehn2019trinity,HLSrelativistic,periodtrinity}. This POVM is said to be covariant with respect to $G_k$. 

This orientation POVM can be employed to construct the $n^{\rm th}$-moment operators for the QRF frame orientation
\ba
\hat O_k^{(n)}\ce\int_{G_k}E_k(dg)\, g^n,
\ea
which give rise to a generalization of canonical conjugacy 
\ba
\big[\hat O_k^{(n)},\hat C_k\big]=in\,\hat O^{(n-1)}_k-\frac{is_{G_k}}{\mu^{(n-1)}}\sum_{d_k}\ket{e,d_k}\!\bra{e,d_k},\label{canconj}
\ea
where $s_{\mathbb{R}}=0$ and $s_{\rm{U}(1)}=1$ and $e$ labels the origin/identity of $G_k$ \cite{hoehn2019trinity,HLSrelativistic,periodtrinity}. The outcome of the POVM then provides a parameterization of the group $G$ generated by $\hat C$. It can be used to gauge-fix the $G$-orbits and as a reference degree of freedom for $i,j$. In the case where $G=\mathbb{R}$ and $G_k=\rm{U}(1)$, the QRF orientations cannot parametrize the full orbit of $G$, but there the constraint forces all physical degrees of freedom of $i,j$ to be periodic so that this is not an issue~\cite{periodtrinity}.

To move from $\mathcal{H}_\mathrm{phys}$ to the perspective of $k$, we condition upon the degeneracy label as well as the value $g$ of the orientation observable, which are commuting observables due to the eigenvalue-independence of the degeneracy (see \textit{e.g.}~\cite{HLSrelativistic}). The resulting reduction maps then lead to reduced physical Hilbert spaces $\ch_{ij\vert k}^{d_k}$ and observables on it, encoding the physics of $i,j$ as described from the internal perspective of QRF $k$ within the $d_k$ superselection sector. In particular, the relational Schr\"{o}dinger picture reduction map $\calr_{d_k}^{(S)}(g):\ch_{\rm phys}\rightarrow\ch_{ij\vert k}^{d_k}$ is defined by $\mathcal{R}^{(S)}_{d_k} (g) \ce \bra{g,d_k}_{k}\otimes\mathbf{1}_{ij}$ (acting only on elements of $\ch_\mathrm{phys}$), and writing $\ket{\psi_{ij|k}^{d_k}(g)} \ce \mathcal{R}^{(S)}_{d_k}(g) \ket{\psi_{\rm phys}}$, we have
\begin{equation}
    \ket{\psi_{\rm phys}} = \mu \sum_{d_k} \int_{G_k} dg \, \ket{g,d_k}_k \otimes\ket{\psi_{ij|k}(g)}
\end{equation}

The relational Heisenberg picture reduction map $\calr_{d_k}:\ch_{\rm phys}\rightarrow\ch_{ij\vert k}^{d_k}$ is obtained by first acting with the frame disentangler (or trivialization) of the QRF $k$, defined by \cite{vanrietvelde2020change,Vanrietvelde:2018dit,hohn2020switch,Hoehn:2018whn,hoehn2019trinity,HLSrelativistic,periodtrinity} 
\ba \label{eDisentDeg}
\ct_{k,\varepsilon} &\ce& \sum_{n=0}^\infty \,\f{i^n}{n!}\hat O_k^{(n)}\otimes(\hat C_i\otimes\mathbf{1}_j+\mathbf{1}_i\otimes\hat C_j+\varepsilon_k\mathbf{1}_{ij})^n = \mu \sum_{d_k} \int_{G_k} dg\,e^{i\varepsilon_k g}\ket{g,d_{k}}\!\bra{g,d_{k}}_k\otimes e^{i(\hat C_i+\hat C_j)g}  \\
&\equiv& \sum_{d_k}\ct_{k,\varepsilon}^{d_k} , \nonumber
\ea
where each $\ct_{k,\varepsilon}^{d_k}$ is a $d_k$-sector-wise disentangler, \textit{i.e.}
\begin{equation}
    \ct_{k,\varepsilon}^{d_k}\,\ket{\psi_{\rm phys}}
=\ket{\varepsilon_k,d_{k}}_k \otimes\ket{\psi_{ij\vert k}^{d_k}},
\end{equation}
with $\ket{\psi_{ij\vert k}^{d_k}} \in \ch_{ij\vert k}^{d_k}$ (\textit{cf.} Eq.~\eqref{dis} in the main  text), and where $-\varepsilon_k$ is a fixed, but arbitrary element of $\sigma_{ij\vert k}$. Note that $\varepsilon_k$ must be such that $\ket{\varepsilon_k,d_k}\neq\ket{\varepsilon_k,d_k'} \, \forall \, d_k \neq d_k'$, so that superselection sectors are still distinguishable upon the action of the disentangler, as required for the  latter to be invertible on physical states. As a second step,  one conditions on the QRF $k$ being in orientation $g$, as in the relational Schr\"odinger picture reduction, so that altogether 
\ba 
\calr_{d_k}\ce\,\calr_{d_k}^{(S)}(g) \cn(g,\varepsilon_k)\,\ct_{k,\varepsilon}.
\ea
where $\cn (g,\varepsilon_k)$ is a normalization factor so that $\calr_{d_k}\ket{\psi_{\rm phys}}=\ket{\psi_{ij\vert k}^{d_k}}$ is a relational Heisenberg picture state which neither depends on $g$ nor $\varepsilon_k$. Note that this is unitarily equivalent to the relational Schr\"{o}dinger picture state above, \textit{i.e.}~$\mathcal{R}^{(S)}_{d_k}(g) \ket{\psi_{\rm phys}}=e^{-i(\hat{C}_{i}+\hat C_j)g}\mathcal{R}_{d_k}(g) \ket{\psi_{\rm phys}}$~\cite{hoehn2019trinity,HLSrelativistic,periodtrinity}.

The disentangler $\ct_{k,\varepsilon}$ transforms the constraint $\hat C$ in such a way that it only acts on the QRF $k$ and fixes its now-redundant degrees of freedom $\left(\ct_{k,\varepsilon}\, \hat C\,\ct_{k,\varepsilon}^{-1}\right)\ct_{k,\varepsilon}\ket{\psi_{\rm phys}}=\left(\left(\hat C_k-\varepsilon_k\right)\otimes\mathbf{1}_{ij}\right)\ct_{k,\varepsilon}\ket{\psi_{\rm phys}}$, $\forall\,\ket{\psi_{\rm phys}}\in\ch_{\rm phys}$, when acting on `trivialized' physical states \cite{hoehn2019trinity,HLSrelativistic,periodtrinity}. Here,
\ba 
\ct_{k,\varepsilon}^{-1}\ce\sum_{n=0}^\infty \,\f{(-i)^n}{n!}\hat O_k^{(n)}\otimes(\hat C_i\otimes\mathbf{1}_j+\mathbf{1}_i\otimes\hat C_j+\varepsilon_k\mathbf{1}_{ij})^n,\nn
\ea
which in general is only the inverse of the disentangler when acting on \emph{physical states} \cite{hoehn2019trinity,HLSrelativistic,periodtrinity}, but this is all we need.

The reduction is invertible on each $d_k$-sector in both pictures when acting on \emph{physical states} and  $\ch_{ij\vert k}^{d_k}$ is isometric to $\ch_{\rm phys}^{d_k}$ \cite{vanrietvelde2020change,Vanrietvelde:2018dit,hohn2020switch,Hoehn:2018whn,hoehn2019trinity,HLSrelativistic,periodtrinity}. In addition, as we shall see, the $\{\mathcal{H}_{ij \vert k}^{d_{k}}\}_{d_k}$ are isometric for different values of $d_k$. The choice of the degeneracy sector in which to carry out the reduction is then arbitrary, as they are in some sense equivalent. One must, however, keep track of the chosen degeneracy sector(s) in order to maintain invertibility. For example, the frame change map in the relational Heisenberg picture is now defined by \cite{vanrietvelde2020change,Vanrietvelde:2018dit,hohn2020switch,Hoehn:2018whn,hoehn2019trinity,HLSrelativistic,periodtrinity} 
\begin{equation}
    \Lambda_{k\to i}^{d_k,d_i}\ce\calr_{d_i}\circ\calr^{-1}_{d_k}  ,\label{deglambda}
\end{equation}
where $\calr_{d_k}^{-1}\ce\cn'(\varepsilon_k)\, \left(\ct_{k,\varepsilon}^{d_k}\right)^{-1}\left(\ket{\varepsilon_k,d_k}_k\otimes \mathbf{1}_{ij}\right) $, and $\cn'(\varepsilon_k)$ is a normalization factor such that $\calr^{-1}_{d_k}\ket{\psi^{d_k}_{ij\vert k}}=\ket{\psi^{d_k}_{\rm phys}}$. This is to be understood as appending the tensor factor $\ket{\varepsilon_k,d_k}_k\otimes$ to the reduced state in $\ch_{ij\vert k}^{d_k}$ (which no longer contains any $k$ degrees of freedom) and then performing the kinematical `re-entangling', \textit{i.e.}\ inverse of the disentangling operation. This map is only an invertible isometry between $\ch_{ij\vert k}^{d_k}\big|_{d_i}$ and $\ch_{jk\vert i}^{d_i}\big|_{d_k}$, where $\big|_{d_i}$ denotes restriction to the $d_i$-degeneracy sector in $\ch_{ij\vert k}^{d_k}$ (and likewise for $\big|_{d_k}$), and not when acting on arbitrary kinematical states in $\ch_i\otimes\ch_j$. In other words, we can only invertibly transform states with support in the overlap of the $d_k$- and $d_i$-sectors, \textit{i.e.}\ in $\ch_{\rm phys}^{d_k}\cap\ch_{\rm phys}^{d_i}$ at the perspecive-neutral level, from $k$'s to $i$'s perspective, and \textit{vice versa} \cite{hohn2020switch,Hoehn:2018whn,HLSrelativistic,periodtrinity}.

\subsection{Commuting subalgebras of relational observables}

Suppose that the reduced physical Hilbert space admits a tensor factorization across the subsystems $i,j$ from $k$'s perspective, \textit{i.e.}\ $\ch_{ij\vert k}^{d_k}\simeq\ch_{i\vert k}^{d_k} \otimes \ch_{j\vert k}^{d_k}$. 
We denote the algebra generated by local subsystem observables on $\ch_{ij|k}^{d_k}$ by $\ca_{ij|k}^{d_k}:=\ca_{i\vert k}^{d_k}\otimes\ca_{j\vert k}^{d_k}$, where $\ca_{i\vert k}^{d_k}\ce\cb(\ch_{i\vert k}^{d_k})$ (hence a type I factor). All statements in the main body pertaining to observable algebras remain true superselection sector-wise in the presence of degeneracies;  hence, in the corresponding statements the algebras should be equipped with the superselection sector label $d_k$. 

Before proving Theorem~\ref{lem_1}, we  recall a few properties of relational observables. The algebra of relational observables of $i$ relative to the QRF $k$ is given by $\ca_{\rm phys}^{i\vert k,d_k}=\calr_{d_k}^{-1}\left(\ca_{i\vert k}^{d_k}\otimes\mathbf{1}_j\right)\,\calr_{d_k}$. Its elements are thus of the form $\calr_{d_k}^{-1}\left(\hat f_i\otimes\mathbf{1}_j\right)\calr_{d_k}$ for some $\hat f_i\in\ca_{i\vert k}^{d_k}$. As shown in \cite{hoehn2019trinity,HLSrelativistic,periodtrinity}, given our assumptions on the constraint $\hat C$, the embedded observables can be written as,\footnote{
In \cite{hoehn2019trinity,HLSrelativistic,periodtrinity}, the relational observables are given in slightly different form as $\hat F^{d_k}_{f_i\otimes 1_j,k}(g)$ and correspond to the value of $\hat f_i$ when the QRF $k$ is in `orientation' $g$. These are equal to the expression here if one replaces $\hat f_i$ by the Heisenberg observable $\hat f_i(g)=e^{ig\hat C_i}\,\hat f_i\,e^{-ig\hat C_i}$, \textit{i.e.}\ $\hat F^{d_k}_{f_i\otimes 1_j,k}(g)\equiv\hat F^{d_k}_{f_i(g)\otimes 1_j,k}$. Here, we drop the frame orientation label $g$ for simplicity. }
\ba
\calr_{d_k}^{-1}\left(\hat f_i\otimes\mathbf{1}_j\right)\calr_{d_k}\,\ket{\psi_{\rm phys}}=\hat F^{d_k}_{f_i\otimes\mathbf{1}_j,k}\,\ket{\psi_{\rm phys}},
\ea
$\forall\,\ket{\psi_{\rm phys}}\in\ch_{\rm phys}$, where 
\ba
\hat F^{d_k}_{f_i\otimes\mathbf{1}_j,k}&=&\int_{G_k}E^{d_k}_k(dg)\otimes\sum_{n=0}^{\infty}\f{i^n}{n!}g^n\big[\hat f_i\otimes\mathbf{1}_j,\hat C_i\otimes\mathbf{1}_j+\mathbf{1}_i\otimes\hat C_j\big]_n\label{relobs}\\
&=&\sum_{n=0}^{\infty}\f{i^n}{n!}\,\hat O^{(n)}_{d_k}\otimes \big[\hat f_i\otimes\mathbf{1}_j,\hat C_i\otimes\mathbf{1}_j+\mathbf{1}_i\otimes\hat C_j\big]_n,\nn
\ea
where $[\hat A,\hat B]_n=\big[\big[\hat A,\hat B\big]_{n-1},\hat B\big]$ is the $n^{\rm th}$-order nested commutator with the convention $\big[\hat A,\hat B\big]_0=\hat A$, $G_k$ is the group generated by $\hat C_k$, and $E_k^{d_k}(dg)={\mu}\,dg\ket{g,d_k}\!\bra{g,d_k}$ and $\hat O^{(n)}_{d_k}$ are the effect density and $n^{\rm th}$-QRF-moment of the $d_k$-sector. This expression can also be written in terms of a so-called $G$-twirl, \textit{i.e.}\ an incoherent group averaging, over $G_k$ \cite{hoehn2019trinity,HLSrelativistic,periodtrinity}. However, the $G$-twirl form will look slightly differently in the $G_k=\rm{U}(1)$ and $G_k=\mathbb{R}$ cases which is why here we work with the power series which is valid for both. These observables are indeed invariant (Dirac) observables, satisfying  \cite{hoehn2019trinity,HLSrelativistic,periodtrinity}
\ba
\big[\hat F^{d_k}_{f_i\otimes\mathbf{1}_j,k},\hat C\big]\ket{\psi_{\rm phys}}=0, \q\q\forall\,\ket{\psi_{\rm phys}}\in\ch_{\rm phys}.\label{diracobs} 
\ea 
In particular, these expressions constitute a quantization of the systematic power-series expansion of classical relational observables developed in \cite{dittrich_partial_2006,dittrich_partial_2007,Thiemann:2004wk}, here for the case of a single constraint (see also \cite{Chataignier:2019kof,Chataignier:2020fys}).

By the arguments in the main body, the assumption on the tensor factorization of the reduced physical Hilbert spaces from the beginning of this section implies that we also have, for example, the two tensor factorizations $\ch_{\rm phys}^{d_A}=\ch_{\rm phys}^{B\vert A,d_A}\otimes\ch_{\rm phys}^{C\vert A,d_A}$ and $\ch_{\rm phys}^{d_B}=\ch_{\rm phys}^{A\vert B,d_B}\otimes\ch_{\rm phys}^{C\vert B,d_B}$ in the superselection sectors of $A$ and $B$. The following result shows that these constitute \emph{different} physical tensor factorizations, including in the overlap of the $d_A$- and $d_B$-superselection sectors, $\ch_{\rm phys}^{d_A}\cap\ch_{\rm phys}^{d_B}$. The reason we have to focus on overlaps has to do with the observation above that Eq.~\eqref{deglambda} is only invertible for such overlaps. \\

\noindent{\bf Theorem~\ref{lem_1}.} 
\emph{The algebra of relational observables of $C$ relative to $A$, is distinct from that of $C$ relative to $B$, \textit{i.e.}\ $\ca_{\rm phys}^{C\vert A,d_A}\big|_{d_B}\neq\ca_{\rm phys}^{C\vert B,d_B}\big|_{d_A}$, where $\big|_{d_k}$ denotes restriction to the $d_k$-sector.}\\

While the restrictions of these algebras to an overlap of superselection sectors are thus distinct, note that they will be isomorphic when $\ch_{\rm phys}^{C\vert A,d_A}\big|_{d_B}\simeq\ch_{\rm phys}^{C\vert B,d_B}\big|_{d_A}$. This will, however, not always be the case, especially when the number of degeneracy sectors for $A$ is different from $B$ (see \cite{hohn2020switch} for an example).

\begin{proof}
We will use the power series expansion of relational observables to show that $\ca_{\rm phys}^{i\vert k,d_k}$ does not commute with $\ca_{\rm phys}^{k\vert j,d_j}$ from which we then infer the claim. Since
\ba
\hat F^{d_k}_{f_i\otimes\mathbf{1}_j,k}=\hat F^{d_k}_{f_i,k}\otimes\mathbf{1}_j,\label{relobs2}
\ea
where $\hat F^{d_k}_{f_i,k}$ is the same expression as in Eq.~\eqref{relobs}, except with the $j$-tensor factor removed, we also have from Eq.~\eqref{diracobs}
\ba
\big[\hat F^{d_k}_{f_i,k},\mathbf{1}_k\otimes\hat C_i+\hat C_k\otimes\mathbf{1}_i\big]\otimes\mathbf{1}_j\ket{\psi_{\rm phys}}=0.\label{commute}
\ea
 When $\hat f_i$ is a constant of motion, $\big[\hat f_i,\hat C_i\big]=0$, Eqs.~\eqref{relobs} and~\eqref{relobs2} imply $\hat F^{d_k}_{f_i,k}=\mathbf{1}_k\otimes\hat f_i$ because for the zeroth moment, $\hat O^{(0)}_k=\mathbf{1}_k$.
Specifically,  we have $\hat F^{d_k}_{\hat C_i,k}=\mathbf{1}_k\otimes\hat C_i$ and clearly the restriction $\hat C_i\big|_{d_k}$ of $\hat C_i$ to the $d_k$-sector lies in $\ca_{i\vert k}^{d_k}$. As shown in \cite{hoehn2019trinity,HLSrelativistic,periodtrinity}, the map $\hat f_i\mapsto \hat F_{f_i\otimes\mathbf{1}_j,k}^{d_k}$ defines a (weak) algebra homomorphism $\ca_{i\vert k}^{d_k}\rightarrow \ca_{\rm phys}^{d_k}$, which here implies that for all $\ket{\psi_{\rm phys}}\in\ch_{\rm phys}$
 \ba
 \hat F_{[f_i,C_i]\otimes\mathbf{1}_j,k}^{d_k}\,\ket{\psi_{\rm phys}}=\big[\hat F^{d_k}_{f_i\otimes\mathbf{1}_j,k},\hat F^{d_k}_{C_i\otimes\mathbf{1}_j,k}\big]\,\ket{\psi_{\rm phys}}=\Big[\hat F^{d_k}_{f_i\otimes\mathbf{1}_j,k},\mathbf{1}_k\otimes\hat C_i\otimes\mathbf{1}_j\Big]\,\ket{\psi_{\rm phys}},\nn
 \ea 
 where on the left hand side we mean with the commutator $[\hat f_i,\hat C_i]$ its restriction to $\ca_{i\vert k}^{d_k}$. Now $\ca_{i\vert k}^{d_k}$ must contain  operators $\hat f_i$ that are not constants of motion if $\hat C_i\big|_{d_k}$ is non-trivial, \textit{i.e.}\ that satisfy $[\hat f_i,\hat C_i]\neq0$ when restricted to $\ca_{i\vert k}^{d_k}$.
 In particular, if $\hat f_i$ is \emph{not} a constant of motion, then also 
 \ba 
 \Big[\hat F^{d_k}_{f_i\otimes\mathbf{1}_j,k},\mathbf{1}_k\otimes\hat C_i\otimes\mathbf{1}_j\Big]\,\ket{\psi_{\rm phys}}\neq0
 \ea 
 for general physical states because only the  relational observable corresponding to zero satisfies $\hat F^{d_k}_{0\otimes 1_j,k}\ket{\psi_{\rm phys}}=0$, $\forall\,\ket{\psi_{\rm phys}}\in\ch_{\rm phys}$.
  Hence, thanks to Eq.~\eqref{commute} we also have
 \ba
 \big[\hat F^{d_k}_{f_i\otimes\mathbf{1}_j,k},\hat C_k\otimes\mathbf{1}_i\otimes\mathbf{1}_j\big]\ket{\psi_{\rm phys}}\neq0
 \ea
 for general physical states. But $\hat F^{d_k}_{f_i\otimes\mathbf{1}_j,k}\in\ca_{\rm phys}^{i\vert k,d_k}$, while clearly (an appropriate restriction of) $\hat C_k\otimes\mathbf{1}_i\otimes\mathbf{1}_j$ is an element of \emph{both} $\ca_{\rm phys}^{k\vert i,d_i}$ and $\ca_{\rm phys}^{k\vert j,d_j}$. For $k=A$, $i=C$ and $j=B$, we thus find $\big[\ca_{\rm phys}^{C\vert A,d_A},\ca_{\rm phys}^{A\vert B,d_B}\big]\neq0$. Then, since $\big[\ca_{\rm phys}^{C\vert B,d_B},\ca_{\rm phys}^{A\vert B,d_B}\big]=0$, we have that $\ca_{\rm phys}^{C\vert A,d_A}\big|_{d_B}\neq\ca_{\rm phys}^{C\vert B,d_B}\big|_{d_A}$, where $\big|_{d_k}$ denotes restriction to the $d_k$-sector.
\end{proof}

\subsection{Example: QRF dependent subsystem locality and correlations}

Here we illustrate Theorem~\ref{lem_1} and the observation of the main text that subsystem locality and correlations are dependent on the choice of QRF 
in a simple example: three particles on a line subject to global translation invariance $\hat C=\hat p_A+\hat p_B+\hat p_C$~\cite{vanrietvelde2020change,Vanrietvelde:2018dit}. 

Solving the constraint for $p_k$, physical states read $\ket{\psi_{\rm phys}}=\int_{\mathbb{R}^2}dp_i dp_j\psi_{ij\vert k}(p_i,p_j)\ket{-p_i-p_j}_k\otimes\ket{p_i}_i\otimes\ket{p_j}$, where the wave function $\psi_{ij\vert k}(p_i,p_j)\ce\psi_{\rm kin}(p_k=-p_i-p_j,p_i,p_j)$, corresponding to the \emph{same} kinematical state in $\ch_{\rm kin}\simeq L^2(\mathbb{R})_A\otimes L^2(\mathbb{R})_B\otimes L^2(\mathbb{R})_C$, will look different for different choices of $k$ and hence feature different correlations structures for different QRF choices. For example, as shown below, any physical state which from $C$'s perspective is separable in $A$ and $B$, is from $B$'s perspective necessarily entangled across $A$ and $C$. Indeed, the tensor factors of $i,j$ correspond to the translation-invariant canonical pairs $(\hat q_i-\hat q_k,\hat p_i), (\hat q_j-\hat q_k,\hat p_j)$ and these generate the algebra of observables on $\ch_{\rm phys}$. Owing to the Stone-von Neumann theorem, we have the tensor factorization $\ch_{\rm phys}\simeq L^2(\mathbb{R})_i\otimes L^2(\mathbb{R})_j$. Since the canonical pairs will be different for different choices of $k$, so will be this physical tensor factorization. This translates into the internal QRF perspectives: the reduction from Eq.~\eqref{qcm} reads $\calr_k =\sqrt{2\pi}\left(\bra{q_k=0}\otimes\mathbf{1}_{ij}\right)\ct_{k,0}$ with unitary disentangler  $\ct_{k,0}=\exp\left[i\hat q_k(\hat p_i+\hat p_j)\right]$ and has the inverse $\calr_k^{-1}=\ct_{k,0}^\dag\left(\ket{p_k=0}\otimes\mathbf{1}_{ij}\right)$ \cite{vanrietvelde2020change}. The translation-invariant canonical pairs become the standard canonical pairs of $i,j$: $\calr_k(\hat q_i-\hat q_k,\hat p_i)\calr_k^{-1}=(\hat q_i,\hat p_i)$ and similarly for $j$, inducing the tensor factorization $\ch_{ij\vert k}=\ch_i\otimes\ch_j$ with the original kinematical $i,j$ Hilbert spaces $\ch_i,\ch_j=L^2(\mathbb{R})$. However, the QRF transformation maps the canonical pairs from $A$'s to $B$'s perspective as $\Lambda_{A\to B}(\hat q_B,\hat p_B)\Lambda_{A\to B}^{-1}=(-\hat q_A,-\hat p_A-\hat p_C)$ and $\Lambda_{A\to B}(\hat q_C,\hat p_C)\Lambda_{A\to B}^{-1} = (\hat q_C-\hat q_A,\hat p_C)$ \cite{Giacomini:2017zju,vanrietvelde2020change}. That is, the  tensor factorization between $B$ and $C$ from $A$'s perspective transforms into a tensor factorization between combinations of $A$ and $C$ degrees of freedom relative to $B$.

Using a different (but  equivalent \cite{vanrietvelde2020change}) formalism without constraints and  physical states and only the internal perspective states, it was already stated in \cite{Giacomini:2017zju} that in general product states map to entangled states under the above QRF transformations.  Here, we give a full proof of  this observation.

Consider a physical state which from $C$'s perspective is separable in $A$ and $B$, \textit{i.e.}\ $\psi_\mathrm{AB\vert C}(p_\mathrm{A},p_\mathrm{B})=\psi_\mathrm{A\vert C}(p_\mathrm{A})\psi_\mathrm{B\vert C}(p_\mathrm{B})$ for some $\psi_\mathrm{A\vert C},\psi_\mathrm{B\vert C} \in L^2(\mathbb{R})$. Transforming to $B$'s perspective \cite{vanrietvelde2020change,Giacomini:2017zju}, one finds
\begin{equation}
\begin{aligned} \label{eNonSeparable}
    \psi_\mathrm{AC\vert B}(p_\mathrm{A},p_\mathrm{C}) &= \psi_\mathrm{A\vert C}(p_\mathrm{A})\psi_\mathrm{B\vert C}(-p_\mathrm{A}-p_\mathrm{C}) .
\end{aligned}
\end{equation}
This is clearly a product state across the factorization defined by the mixed canonical pairs $(\hat q_A-\hat q_C,\hat p_A)$ and $(-\hat q_C,-\hat p_A-\hat p_C)$ mentioned in the main text. However, we now prove by contradiction that the right hand side is necessarily unequal to 
$\psi_\mathrm{A\vert B}(p_\mathrm{A})\psi_\mathrm{C\vert B}(p_\mathrm{C})$
for any $\psi_\mathrm{A\vert B},\psi_\mathrm{C\vert B}\in L^2(\mathbb{R})$. Any state from $C$'s perspective which is separable across $A$ and $B$ is therefore, from $B$'s perspective, necessarily entangled across $A$ and $C$. Due to the invariance of entanglement under local unitary operations \cite{nielsenquantuminfandcomp2011}, this fact is independent of which observable-representation the wavefunction is expressed in.

Indeed, in order for there to exist $\psi_\mathrm{A\vert B},\psi_\mathrm{C\vert B}\in L^2(\mathbb{R})$ such that $\psi_\mathrm{AC\vert B}(p_\mathrm{A},p_\mathrm{C})= \psi_\mathrm{A\vert B}(p_\mathrm{A})\psi_\mathrm{C\vert B}(p_\mathrm{C})$, then from Eq.~\eqref{eNonSeparable} we must have that $\psi_\mathrm{B\vert C}(p+p')=\psi_\mathrm{B\vert C}(p)\psi_\mathrm{B\vert C}(p')$. This, however, implies that $\psi_\mathrm{B\vert C}(p)\notin L^2(\mathbb{R})$, as we now prove by contradiction. Assume that there exists a continuous wavefunction $\psi_{\mathrm B \vert C} (p)$ satisfying the relation $\psi_{\mathrm B \vert C} (p+p')= \psi_{\mathrm B \vert C} (p)\psi_{\mathrm B \vert C} (p')$. Its modulus, $\abs{\psi_{\mathrm B \vert C}}(p)$ must then satisfy the same relation. Note that if there exists $p_{*} \in \mathbb{R}$ such that $\abs{\psi_{\mathrm B \vert C}}(p_{*})=0$, then $\abs{\psi_{\mathrm B \vert C}}(p) =0~\forall p \in \mathbb{R}$ since $\abs{\psi_{\mathrm B \vert C}}(p_{*} +p) = \abs{\psi_{\mathrm B \vert C}} (p_{*}) \abs{\psi_{\mathrm B \vert C}} (p) =0~ \forall p \in\, \mathbb{R}$.  Assuming the wavefunction is non-zero, we may define an auxiliary function $g(p) := \log \left[ \abs{\psi_{\mathrm B \vert C}}(p)\right]$ which is continuous on $\mathbb{R}$ and obeys Cauchy's functional equation, $g(p+p') = g(p) + g(p')$. It follows from the fact that $g(0+0)=g(0)+g(0)=g(0)$ that $g(0)=0$, from $g(p - p) = g(p) + g(-p) = g(0) =0$ that $g(-p) = -g(p)$, and by induction that $g(n p) = n g(p)~ \forall n \in \mathbb{Z}$. Given $m,n \in \mathbb{Z}$ with $n \neq 0$, we  have $g( n \frac{m}{n}) = n g(\frac{m}{n}) =mg(1)$, where we have used the previous identity twice. This implies that $g(p)= g(1)p~ \forall p \in \mathbb{Q}$. Since the continuous functions $g$ and $p \mapsto g(1) p$ agree on $\mathbb{Q}$, a dense subset of $\mathbb{R}$, then they must be equal on the reals. This means that $g(p)=Kp$ for some $K \in \mathbb{R}$, which then implies that $\abs{\psi_{\mathrm B \vert C}}(p)= e^{Kp}$. This is a contradiction, and thus $\psi_{\mathrm{B \vert C}}(p)$ is not square-integrable.

\subsection{A necessary and sufficient condition for physical factorizability according to the kinematical subsystems}

To incorporate degeneracies, the (possibly improper) projector from the kinematical $i,j$ tensor factors onto the reduced physical Hilbert space from $k$'s perspective given in Eq.~\eqref{ePhysicalProjectorBipartite} of the main text must be replaced by $\Pi_{\sigma_{ij \vert k}}^{d_k}:\mathcal{H}_{i} \otimes \mathcal{H}_{j}\to\ch_{ij|k}^{d_k}$, with
\begin{equation} \label{ePhysicalProjectorBipartiteDeg}
    \Pi_{\sigma_{ij \vert k}}^{d_k} \ce  \sum_{d_i,d_{j}}\:\: \intsum_{\:\:\: c_i ,c_j \vert c_i + c_j \in \sigma_{ij \vert k}} \ket{c_{i}, d_{i}} \bra{ c_{i},d_{i}}_{i} \otimes \ket{c_{j}, d_{j}} \bra{ c_{j},d_{j}}_{j} .
\end{equation}
Note that the form of this projector is independent of $d_{k}$, and consequently so too is the projection of a kinematical state onto $\mathcal{H}_{ij \vert k}^{d_{k}}$. The $\{\mathcal{H}_{ij \vert k}^{d_{k}}\}_{d_k}$ therefore have identical structure (as do the observable algebras thereon).

Recall the definition $\sigma_{i\vert jk} \ce \sigma_{i} \cap \spec \left(-\hat{C}_{j}-\hat{C}_{k} \right)$ from the main text. We now prove that the reduced physical Hilbert space $\ch^{d_k}_{ij|k}$ factorizes into $i$ and $j$ subsystems $\ch^{d_k}_{i|k}$ and $\ch^{d_k}_{j|k}$ if and only if
\begin{equation} \label{eConditionSM} 
\sigma_{ij \vert k} = \mathrm{M} \left( \sigma_{i\vert jk} , \sigma_{j\vert ik} \right),
\end{equation}
where $\mathrm{M} \left( \cdot , \cdot \right)$ denotes Minkowski addition. First note that a necessary and sufficient condition for this factorization is that $\Pi_{\sigma_{ij \vert k}}^{d_k}$ in Eq.~\eqref{ePhysicalProjectorBipartiteDeg} factorizes along the same lines, \textit{i.e.~}that
\begin{equation} \label{ePhysicalProjectorSeparableTilde}
\Pi_{\sigma_{ij \vert k}}^{d_k} = \left( \sum_{d_i} ~~ \intsum_{\:\:\: c_{i} \in \tilde{\sigma}_{i}} \ket{c_{i},d_{i}}\bra{c_{i},d_{i}}_i \right) \otimes \left(  \sum_{d_j} ~~ \intsum_{\:\:\: c_{j} \in \tilde{\sigma}_{j}} \ket{c_{j},d_{j}}\bra{c_{j},d_{j}}_j \right).
\end{equation}
for some subsets $\tilde{\sigma}_{i}\subset\sigma_{i}$ and $\tilde{\sigma}_{j}\subset\sigma_{j}$. From Eq.~\eqref{ePhysicalProjectorBipartiteDeg} we see that for this to hold, we must have that
\begin{equation} \label{eConditionTilde}
\sigma_{ij \vert k} = \mathrm{M} \left(\tilde{\sigma}_i , \tilde{\sigma}_j \right) .
\end{equation}
Now assume that Eq.~\eqref{eConditionTilde} holds, and consider some $\lambda_{i}\in\sigma_{i\vert jk}$. By definition, there must exist a $\lambda_{j}\in\sigma_{j\vert ik}$ such that $\lambda_{i}+\lambda_{j}\in\sigma_{ij\vert k}$, and therefore $\Pi^{d_k}_{\sigma_{ij \vert k}} \ket{\lambda_{i}, d_{i}}\bra{\lambda_{i}, d_{i}}_i \otimes \ket{\lambda_{j}, d_{j}}\bra{\lambda_{j}, d_{j}}_j \neq 0 \,\forall d_{i},d_{j}$. From Eq.~\eqref{ePhysicalProjectorSeparableTilde} we see that this implies $\lambda_{i}\in\tilde{\sigma}_{i}$, and since this holds for any $\lambda_{i}\in\sigma_{i\vert jk}$, we must have that $\tilde{\sigma}_{i}=\sigma_{i\vert jk}$. By symmetry, $\tilde{\sigma}_{j}=\sigma_{j\vert ik}$, thus proving the necessity of Eq.~\eqref{eConditionSM} for the factorization $\ch^{d_k}_{ij|k}\simeq\ch^{d_k}_{i|k}\otimes\ch^{d_k}_{i|k}$. The sufficiency of condition~\eqref{eConditionSM} can be proven by noting that it implies 
\begin{equation}
\Pi_{\sigma_{ij \vert k}}^{d_k} = \left( \sum_{d_i} ~~ \intsum_{\:\:\: c_{i} \in \sigma_{i\vert jk}} \ket{c_{i},d_{i}}\bra{c_{i},d_{i}}_i \right) \otimes \left(  \sum_{d_j} ~~ \intsum_{\:\:\: c_{j} \in \sigma_{j\vert ik}} \ket{c_{j},d_{j}}\bra{c_{j},d_{j}}_j \right) ,
\end{equation}
\textit{i.e.} Eq.~\eqref{ePhysicalProjectorSeparableTilde}, and therefore $\ch^{d_k}_{ij|k}\simeq\ch^{d_k}_{i|k}\otimes\ch^{d_k}_{i|k}$.

%When the factorizability condition given in Eq.~\eqref{eCondition} of the main text is satisfied, namely when there exists a $\tilde{\sigma}_i \subseteq \sigma_i$ (likewise $\tilde{\sigma}_j$) such that 
%\begin{equation}
%\sigma_{ij \vert k} = \mathrm{M} \left(\tilde{\sigma}_i , \tilde{\sigma}_j \right) ,
%\end{equation}
%where $\mathrm{M}(\cdot,\cdot)$ denotes Minkowski addition, then this projector can be written
%\begin{equation}
%\Pi_{\sigma_{ij \vert k}}^{d_k} = \left( \sum_{d_i} ~~ \intsum_{\:\:\: c_{i} \in \tilde{\sigma}_{i}} \ket{c_{i},d_{i}}\bra{c_{i},d_{i}}_i \right) \otimes \left(  \sum_{d_j} ~~ \intsum_{\:\:\: c_{j} \in \tilde{\sigma}_{j}} \ket{c_{j},d_{j}}\bra{c_{j},d_{j}}_j \right).
%\end{equation}

\subsection{(Non-)factorizability of the observable algebra in frame perspectives}

In the main body, we discussed that projecting kinematical basis elements $\hat A_i\otimes\hat A_j\in\cb(\ch_i)\otimes\cb(\ch_j)$ with the (possibly improper) projector $\Pi_{\sigma_{ij\vert k}^{d_k}}$ into $k$'s perspective necessarily yields observables for $i$ and $j$ that are not of product form across these subsystems, unless condition~\eqref{eCondition} is fulfilled. Here, we discuss a subtlety that arises in this argument when $\hat A_i\otimes\hat A_j$ is diagonal in the eigenbases of $\hat C_i,\hat C_j$, 
\ba
\hat A_i\otimes\hat A_j=\sum_{d_i,d_{j}}\:\: \intsum_{\:\:\: c_i ,c_j }\,A_i^{d_i}(c_i)\,A_j^{d_j}(c_j)\ket{c_i,d_i}_i\!\bra{c_i,d_i}\otimes\ket{c_j,d_j}_j\!\bra{c_j,d_j}\label{prodform}
\ea 
for some coefficients $A_i^{d_i}(c_j)$ (and likewise for $j$), in which case it will commute with $\Pi_{\sigma_{ij\vert k}}^{d_k}$. As we shall now see, the argument remains valid also in this case.

First note that the projection of such a basis element,
\ba 
\Pi_{\sigma_{ij\vert k}}^{d_k}\left(\hat A_i\otimes\hat A_j\right)=\left(\hat A_i\otimes\hat A_j\right)\Pi_{\sigma_{ij\vert k}}^{d_k}=\sum_{d_i,d_{j}}\:\: \intsum_{\:\:\: c_i ,c_j\vert c_i+c_j\in\sigma_{ij\vert k}} \,A_i^{d_i}(c_i)\,A_j^{d_j}(c_j)\ket{c_i,d_i}_i\!\bra{c_i,d_i}\otimes\ket{c_j,d_j}_j\!\bra{c_j,d_j},\label{nonprodform}
\ea 
\emph{cannot} be written in product form across $i$ and $j$, unless condition \eqref{eCondition} is met because of the condition $c_i+c_j\in\sigma_{ij\vert k}$ in the integral-sum.

At this point, we have to distinguish the case in which 
\begin{itemize}
\item[(i)] $\Pi_{\sigma_{ij\vert k}}^{d_k}$ is a proper projector, $\left(\Pi_{\sigma_{ij\vert k}}^{d_k}\right)^2=\Pi_{\sigma_{ij\vert k}}^{d_k}$ and $\ch_{ij\vert k}^{d_k}$ is a proper subspace of $\ch_i\otimes\ch_j$, from the case in which \item[(ii)] it is an improper projector and applying it twice leads to a divergence, formally $\left(\Pi_{\sigma_{ij\vert k}}^{d_k}\right)^2=\infty\cdot\Pi_{\sigma_{ij\vert k}}^{d_k}$, such that $\ch_{ij\vert k}^{d_k}$ is not a subspace of $\ch_i\otimes\ch_j$, but should rather be understood as a space of densities over $\ch_i\otimes\ch_j$ in a rigged Hilbert space construction.
\end{itemize}
Case (i) arises, unless $\sigma_{ij\vert k}$ is discrete while $\hat C_i+\hat C_j$ has continuous spectrum, in which case (ii) occurs \cite{hoehn2019trinity,HLSrelativistic,periodtrinity}.

In case (i), Eq.~\eqref{nonprodform} constitutes an element of the algebra $\cb(\ch_{ij\vert k}^{d_k})$. 

In case (ii) the integral-sum in Eq.~\eqref{nonprodform} is a discrete sum and  a subtlety arises: the operator in this form is \emph{not} an element of $\cb(\ch_{ij\vert k}^{d_k})$. It \emph{cannot} act on reduced physical states, for otherwise it generates a divergence. Indeed, recall that we can write for any reduced physical state in $k$'s perspective $\ch_{ij\vert k}^{d_k}\ni\ket{\psi_{ij\vert k}^{d_k}}=\Pi_{\sigma_{ij\vert k}}^{d_k}\,\ket{\psi_{ij}}$ for some $\ket{\psi_{ij}}\in\ch_{i}\otimes\ch_j$. Hence, $\Pi_{\sigma_{ij\vert k}}^{d_k}\left(\hat A_i\otimes\hat A_j\right)\,\ket{\psi_{ij\vert k}^{d_k}}=\left(\hat A_i\otimes\hat A_j\right)\left(\Pi_{\sigma_{ij\vert k}}^{d_k}\right)^2\,\ket{\psi_{ij}}$, which diverges due to the square of the improper projector. We can circumvent this issue  by simply using the \emph{kinematical} form of this basis observable when acting on reduced physical states. This is consistent because $\hat A_i\otimes\hat A_j$ leaves $\ch_{ij\vert k}^{d_k}$ invariant, seeing that it commutes with $\Pi_{\sigma_{ij\vert k}}^{d_k}$. Notwithstanding, although the kinematical form of such a basis observable is of product form, it is clear that its action on reduced physical states is determined by the terms appearing in Eq.~\eqref{nonprodform}. As such, the action of this observable on reduced physical states is also \emph{not} of product form across $i$ and $j$, except when condition~\eqref{eCondition} is fulfilled.

This last observation can also be understood via the inner product on $\ch_{ij\vert k}^{d_k}$ which in this case is given by $(\phi_{ij\vert k}^{d_k}\vert \psi_{ij\vert k}^{d_k})\ce\braket{\phi_{ij}\vert\psi_{ij\vert k}^{d_k}}$, where $\braket{\cdot\vert \cdot}$ is the inner product on the kinematical tensor factors $\ch_{i}\otimes\ch_j$ and $\ket{\phi_{ij}}$ is any state in that space which projects under $\Pi_{\sigma_{ij\vert k}}^{d_k}$ to $\ket{\phi_{ij\vert k}^{d_k}}\in\ch_{ij\vert k}^{d_k}$ \cite{periodtrinity}.
In other words, also in the reduced physical inner product one acts with kinematical ``bra"-states on physical ``ket"-states and this, of course, is relevant for the spectral decomposition of an observable in the $\hat C_i,\hat C_j$ eigenstates on $\ch_{ij\vert k}^{d_k}$; we can just use its kinematical decomposition.

In conclusion, for both cases (i) and (ii) the reduced physical observable from $k$'s perspective corresponding to a kinematical basis observable $\hat A_i\otimes\hat A_j$ that commutes with the (possibly improper) projector $\Pi_{\sigma_{ij\vert k}}^{d_k}$ also leads to an action on $\ch_{ij\vert k}^{d_k}$ that is \emph{not} of product form across $i$ and $j$, unless condition~\eqref{eCondition} is satisfied.

\subsection{Kinematically commuting subsystem pairs can reduce to physically non-commuting subsystem pairs}

Here we prove Theorem~\ref{thm_2} of the main body which states that, unless condition~\eqref{eCondition} is satisfied, there will exist kinematical observables $\hat A_i\otimes\mathbf{1}_j$ and $\mathbf{1}_i\otimes\hat A_j$, which obviously commute but whose image under reduction with $\Pi_{\sigma_{ij\vert k}}^{d_k}$ does \emph{not} commute. In other words, \emph{kinematically commuting subalgebras, can map into non-commuting observable sets in} $\cb(\ch_{ij\vert k}^{d_k}$). To avoid the technical subtleties surrounding case (ii) of the previous subsection, we first focus on case (i) in Lemma~\ref{lem_2} and return to (ii) in Lemma~\ref{lLemma3} below. The conjunction of Lemmas~\ref{lem_2} and~\ref{lLemma3} yields the statement of Theorem~\ref{thm_2}.

\begin{Lemma}\label{lem_2}
If $\Pi_{\sigma_{ij\vert k}}^{d_k}$ is a proper projector and condition~\eqref{eCondition} is not satisfied, there will exist $\hat A_i\in\cb(\ch_i)$ and $\hat A_j\in\cb(\ch_j)$ such that $\big[\hat A_{i\vert k},\hat A_{j\vert k}\big]\neq0$, where $\hat A_{i\vert k}\ce\Pi_{\sigma_{ij\vert k}}^{d_k}\,\left(\hat A_i\otimes\mathbf{1}_j\right)\,\Pi_{\sigma_{ij\vert k}}^{d_k}\in\cb(\ch_{ij\vert k}^{d_k})$ and $\hat A_{j\vert k}\ce\Pi_{\sigma_{ij\vert k}}^{d_k}\,\left(\mathbf{1}_j\otimes\hat A_j\right)\,\Pi_{\sigma_{ij\vert k}}^{d_k}\in\cb(\ch_{ij\vert k}^{d_k})$. By contrast, if condition~\eqref{eCondition} is satisfied, then $\big[\hat A_{i\vert k},\hat A_{j\vert k}\big]=0$ for all $\hat A_i\in\cb(\ch_i)$ and $\hat A_j\in\cb(\ch_j)$.
\end{Lemma}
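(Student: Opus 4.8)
The plan is to recast everything in terms of the \emph{support set} of the (possibly improper) projector $\Pi^{d_k}_{\sigma_{ij\vert k}}$ of Eq.~\eqref{ePhysicalProjectorBipartiteDeg},
\[
\Sigma \ce \big\{(c_i,c_j)\in\sigma_i\times\sigma_j \;\big|\; c_i+c_j\in\sigma_{ij\vert k}\big\},
\]
so that $\Pi^{d_k}_{\sigma_{ij\vert k}}$ is the projector onto the closed span of $\{\ket{c_i,d_i}_i\otimes\ket{c_j,d_j}_j \mid (c_i,c_j)\in\Sigma\}$; the degeneracy labels play no structural role and I suppress them. Unwinding the definitions, $\sigma_{i\vert jk}$ and $\sigma_{j\vert ik}$ are precisely the two coordinate projections (``shadows'') of $\Sigma$, and — by the necessary-and-sufficient factorizability result proved in the preceding subsection — condition~\eqref{eCondition} holds if and only if $\Sigma$ equals the full Cartesian product of its shadows, $\Sigma=\sigma_{i\vert jk}\times\sigma_{j\vert ik}$.

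The second (easy) assertion is then immediate: if condition~\eqref{eCondition} holds, $\Pi^{d_k}_{\sigma_{ij\vert k}}=P_i\otimes P_j$ with $P_i\ce\sum_{c_i\in\sigma_{i\vert jk},d_i}\ket{c_i,d_i}\!\bra{c_i,d_i}_i$ and $P_j$ defined analogously, so that $\hat A_{i\vert k}=(P_i\hat A_iP_i)\otimes P_j$ and $\hat A_{j\vert k}=P_i\otimes(P_j\hat A_jP_j)$ act on complementary tensor factors; using $P_i^2=P_i$ and $P_j^2=P_j$ one checks that both orderings of the product equal $(P_i\hat A_iP_i)\otimes(P_j\hat A_jP_j)$, hence $[\hat A_{i\vert k},\hat A_{j\vert k}]=0$ for all $\hat A_i\in\cb(\ch_i)$, $\hat A_j\in\cb(\ch_j)$.

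For the first (hard) assertion I would compute the reduced operators in the joint $\hat C_i,\hat C_j$ eigenbasis. Writing $(\hat A_i)_{cc'}\ce\bra{c}\hat A_i\ket{c'}_i$, one finds $\hat A_{i\vert k}=\sum(\hat A_i)_{cc'}\,\ket{c\,b}\!\bra{c'\,b}$ with the sum restricted to $(c,b),(c',b)\in\Sigma$, and similarly for $\hat A_{j\vert k}$. Multiplying these, subtracting, and simplifying the products of indicators gives, for any $(c_i,c_j),(c_i',c_j')\in\Sigma$,
\[
\bra{c_i c_j}\big[\hat A_{i\vert k},\hat A_{j\vert k}\big]\ket{c_i' c_j'}
= (\hat A_i)_{c_i c_i'}\,(\hat A_j)_{c_j c_j'}\,\big(\chi_\Sigma(c_i',c_j)-\chi_\Sigma(c_i,c_j')\big),
\]
with $\chi_\Sigma$ the indicator of $\Sigma$. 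Hence the reduced single-party operators commute for \emph{all} choices of $\hat A_i,\hat A_j$ exactly when $\Sigma$ contains the two ``off-diagonal'' corners of every rectangle whose diagonal corners lie in $\Sigma$ — i.e.\ exactly when $\Sigma$ is a product set. When condition~\eqref{eCondition} fails this is not so, and the plan is to produce an explicit rectangle with three corners in $\Sigma$ and the fourth outside: start from $(c_i^{0},c_j^{0})\in(\sigma_{i\vert jk}\times\sigma_{j\vert ik})\setminus\Sigma$, use membership in the shadows to pick $(c_i^{0},c_j^{1}),(c_i^{1},c_j^{0})\in\Sigma$ with $c_i^{1}\neq c_i^{0}$ and $c_j^{1}\neq c_j^{0}$, select the diagonal pair of the rectangle $\{c_i^{0},c_i^{1}\}\times\{c_j^{0},c_j^{1}\}$ so that the indicator difference above equals $\pm1$, and then take $\hat A_i=\ket{c_i}\!\bra{c_i'}_i+\ket{c_i'}\!\bra{c_i}_i$ and $\hat A_j=\ket{c_j}\!\bra{c_j'}_j+\ket{c_j'}\!\bra{c_j}_j$ (bounded, self-adjoint) to realize a nonzero matrix element, so $[\hat A_{i\vert k},\hat A_{j\vert k}]\neq0$. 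Everything is carried out inside the $d_k$-superselection sector, with $d_i,d_j$ summed over as in Eq.~\eqref{ePhysicalProjectorBipartiteDeg}.

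The main obstacle is this last combinatorial step: turning the \emph{set}-level failure $\Sigma\neq\sigma_{i\vert jk}\times\sigma_{j\vert ik}$ into a genuinely \emph{asymmetric} rectangle (three corners in $\Sigma$, one out), as opposed to one with only its diagonal corners in $\Sigma$, in which case the indicator difference collapses. Making this rigorous requires using the specific (interval- or lattice-type) structure of the spectra that arise here to close the corner-chase, together with the standing hypothesis that $\Pi^{d_k}_{\sigma_{ij\vert k}}$ is a \emph{proper} projector so that $\hat A_{i\vert k},\hat A_{j\vert k}\in\cb(\ch^{d_k}_{ij\vert k})$ are honest bounded operators; the improper case is separated off into Lemma~\ref{lLemma3}.
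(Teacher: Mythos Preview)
Your approach is essentially the paper's: both expand $\hat A_{i\vert k},\hat A_{j\vert k}$ in the joint $\hat C_i,\hat C_j$ eigenbasis and compute the commutator as a difference of two sums whose summing conditions differ only in whether $(c_i',c_j)\in\Sigma$ or $(c_i,c_j')\in\Sigma$ is imposed (your indicator difference $\chi_\Sigma(c_i',c_j)-\chi_\Sigma(c_i,c_j')$ is the clean way to say this). The ``easy'' direction and the commutator formula are identical in content; your $\Sigma$-language is just a tidier packaging of Eq.~\eqref{aacom}.

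Where you are more careful than the paper is precisely the step you flag as ``the main obstacle.'' The paper's proof simply asserts that when condition~\eqref{eCondition} fails, ``the sum will include terms with $c_i,c_j'$ such that $c_i+c_j'\notin\sigma_{ij\vert k}$'' without justifying why the three surviving summing conditions can be simultaneously satisfied at such a point---i.e.\ without producing your ``three-corners-in, one-out'' rectangle. Your worry is legitimate: in full generality the implication does fail. For instance, take $\sigma_i=\sigma_j=\{0,1\}$ and $\text{spec}(\hat C_k)=\{0,-2\}$, so that $\sigma_{ij\vert k}=\{0,2\}$ and $\Sigma=\{(0,0),(1,1)\}$; condition~\eqref{eCondition} is violated, yet every $\hat A_{i\vert k}$ and $\hat A_{j\vert k}$ is diagonal in the $\{\ket{00},\ket{11}\}$ basis and they all commute. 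So the paper's argument has the same gap you identified; your instinct that extra structure on the spectra (interval or arithmetic-progression type) is needed to close it is on target, and you should make that hypothesis explicit rather than leave it as a loose end.
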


\begin{proof}
It is clear that if at least one of $\hat A_i\otimes\mathbf{1}_j$ or $\mathbf{1}_i\otimes\hat A_j$ commutes with $\Pi_{\sigma_{ij\vert k}}^{d_k}$, then $\big[\hat A_{i\vert k},\hat A_{j\vert k}\big]=0$. We thus consider the situation in which neither commutes with the projector and there certainly exist $\hat A_i\in\cb(\ch_i)$ and $\hat A_j\in\cb(\ch_j)$ such that this is true if $\hat C_i,\hat C_j$ are not just multiples of the identity. We can decompose both observables in the eigenbases of $\hat C_i$ and $\hat C_j$:
\ba
\hat A_i\otimes\mathbf{1}_j&=&\sum_{d_i,d_i'}\:\: \intsum_{\:\:\: c_i ,c_i'}\,A_i(c_i,d_i,c_i',d_i')\ket{c_i,d_i}_i\!\bra{c_i',d_i'}\otimes
\:\: \intsum_{\:\:\: c_j}\sum_{d_j}\ket{c_j,d_j}_j\!\bra{c_j,d_j},\nn\\
\mathbf{1}_i\otimes\hat A_j&=&
\:\: \intsum_{\:\:\: c_i}\sum_{d_i}\ket{c_i,d_i}_i\!\bra{c_i,d_i}\otimes\sum_{d_j,d_j'}\:\: \intsum_{\:\:\: c_j ,c_j'}\,A_j(c_j,d_j,c_j',d_j')\ket{c_j,d_j}_j\!\bra{c_j',d_j'},\nn 
\ea 
for some coefficients $A_i,A_j$. Using Eq.~\eqref{ePhysicalProjectorBipartite}, this yields
\ba
\hat A_{i\vert k}&=&\Pi_{\sigma_{ij\vert k}}^{d_k}\,\left(\hat A_i\otimes\mathbf{1}_j\right)\,\Pi_{\sigma_{ij\vert k}}^{d_k}=\sum_{d_i,d_i',d_j}\:\: \underset{c_i+c_j,c_i'+c_j\in\sigma_{ij\vert k}}{\intsum_{\:\:\: c_i ,c_i',c_j\vert}}\,A_i(c_i,d_i,c_i',d_i')\ket{c_i,d_i}_i\!\bra{c_i',d_i'}\otimes\ket{c_j,d_j}_j\!\bra{c_j,d_j},\nn \\
\hat A_{j\vert k}&=&\Pi_{\sigma_{ij\vert k}}^{d_k}\,\left(\mathbf{1}_i\otimes\hat A_j\right)\,\Pi_{\sigma_{ij\vert k}}^{d_k}=\sum_{d_i,d_j,d_j'}\:\: \underset{c_i+c_j,c_i+c_j'\in\sigma_{ij\vert k}}{\intsum_{\:\:\: c_i ,c_j,c_j'\vert}}\,A_j(c_j,d_j,c_j',d_j')\ket{c_i,d_i}_i\!\bra{c_i,d_i}\otimes\ket{c_j,d_j}_j\!\bra{c'_j,d_j'}.\nn 
\ea 
Hence, in conjunction,
\ba
\big[\hat A_{i\vert k},\hat A_{j\vert k}\big]&=&\sum_{d_i,d_i',d_j,d_j'}\:\: \underset{c_i+c_j,c_i'+c_j,c_i'+c_j'\in\sigma_{ij\vert k}}{\intsum_{\:\:\: c_i ,c_i',c_j,c_j'\vert}}\,A_i(c_i,d_i,c_i',d_i')\,A_j(c_j,d_j,c_j',d_j')\ket{c_i,d_i}_i\!\bra{c_i',d_i'}\otimes\ket{c_j,d_j}_j\!\bra{c_j',d_j'}\nn\\
&&\q -\sum_{d_i,d_i',d_j,d_j'}\:\: \underset{c_i+c_j,c_i+c_j',c_i'+c_j'\in\sigma_{ij\vert k}}{\intsum_{\:\:\: c_i ,c_i',c_j,c_j'\vert}}\,A_i(c_i,d_i,c_i',d_i')\,A_j(c_j,d_j,c_j',d_j')\ket{c_i,d_i}_i\!\bra{c_i',d_i'}\otimes\ket{c_j,d_j}_j\!\bra{c_j',d_j'}.\nn\\\label{aacom}
\ea 
The only difference between the two terms on the right hand side is that the first term includes the summing condition $c_i'+c_j\in\sigma_{ij\vert k}$, which the second term \emph{a priori} does not encompass, while the second term includes the summing condition $c_i+c_j'\in\sigma_{ij\vert k}$, which the first term \emph{a priori} does not contain. 

Now when $\sigma_{ij\vert k}=M(\tilde\sigma_i,\tilde\sigma_j)$, \textit{i.e.}\ condition~\eqref{eCondition} is satisfied, it is clear that, in the first term, the three summing conditions combined also imply the additional summing condition $c_i+c_j'\in\sigma_{ij\vert k}$ of the second term. Likewise, in this case the summing condition of the first term $c_i'+c_j\in\sigma_{ij\vert k}$ is implied by the three summing conditions of the second term. Thus, when condition~\eqref{eCondition} is met, the commutator vanishes. 

By contrast, when condition~\eqref{eCondition} is not fulfilled, \textit{i.e.}\ $\sigma_{ij\vert k}\neq M(\tilde\sigma_i,\tilde\sigma_j)$, then, in the first term, the sum will include terms with  $c_i,c_j'$ such that $c_i+c_j'\notin\sigma_{ij\vert k}$ and similarly, the sum of the second term will include terms with $c_i',c_j$ such that $c_i'+c_j\notin\sigma_{ij\vert k}$. Note that these are always conditions on pairs of variables such that one of them is contained in the coefficient function $A_i$ and the other in $A_j$, both of which can be chosen independently as they correspond to $\hat A_i\in\cb(\ch_i)$ and $\hat A_j\in\cb(\ch_j)$, respectively. It is clear that $A_i$ and $A_j$ can be chosen such that $A_i(c_i,d_i,c_i',d_i')A_j(c_j,d_j,c_j',d_j')\neq0$ when $c_i+c_j'\notin\sigma_{ij\vert k}$ and/or $c_i'+c_j\notin\sigma_{ij\vert k}$ and in this case, the commutator does \emph{not} vanish. This proves the claim.
\end{proof}

Next, we also discuss the more subtle case (ii). In this case, we cannot act with the $\hat A_{i\vert k},\hat A_{j\vert k}$ on reduced physical states as this generates a divergence. We thus have to take the improper nature of the projector into account and redefine $\hat A_{i\vert k},\hat A_{j\vert k}$.

\begin{Lemma} \label{lLemma3}
If $\Pi_{\sigma_{ij\vert k}}^{d_k}$ is an \emph{improper} projector and condition~\eqref{eCondition} is not satisfied, there will exist $\hat A_i\in\cb(\ch_i)$, $\hat A_j\in\cb(\ch_j)$ and $\ket{\psi_{ij\vert k}^{d_k}}\in\ch_{ij\vert k}^{d_k}$ such that $\big[\hat A'_{i\vert k},\hat A'_{j\vert k}\big]\,\ket{\psi_{ij\vert k}^{d_k}}\neq0$, where $\hat A'_{i\vert k}\ce\Pi_{\sigma_{ij\vert k}}^{d_k}\,\left(\hat A_i\otimes\mathbf{1}_j\right)\in\cb(\ch_{ij\vert k}^{d_k})$ and $\hat A'_{j\vert k}\ce\Pi_{\sigma_{ij\vert k}}^{d_k}\,\left(\mathbf{1}_j\otimes\hat A_j\right)\in\cb(\ch_{ij\vert k}^{d_k})$. Note that this requires $\big[\Pi_{\sigma_{ij\vert k}}^{d_k},\hat A_i\otimes\mathbf{1}_j\big]\neq0$ and similarly for $\hat A_j$. By contrast, if condition~\eqref{eCondition} is fulfilled, $\big[\hat A'_{i\vert k},\hat A'_{j\vert k}\big]\,\ket{\psi_{ij\vert k}^{d_k}}=0$ for all $\ket{\psi_{ij\vert k}^{d_k}}\in\ch_{ij\vert k}^{d_k}$ for such $\hat A'_{i\vert k},\hat A'_{j\vert k}$.
\end{Lemma}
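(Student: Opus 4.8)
The plan is to rerun the computation from the proof of Lemma~\ref{lem_2}, but now with only a \emph{single} projector to the left of each kinematical factor, and to track carefully how the improper (``shell-truncating'') nature of $\Pi_{\sigma_{ij\vert k}}^{d_k}$ interacts with the product $\hat A'_{i\vert k}\hat A'_{j\vert k}$. First I would justify the redefinition $\hat A'_{i\vert k}\ce\Pi_{\sigma_{ij\vert k}}^{d_k}(\hat A_i\otimes\mathbf{1}_j)$: in case (ii) a reduced physical state is a density over $\ch_i\otimes\ch_j$ supported on the discrete family of shells $\{c_i+c_j=c\}_{c\in\sigma_{ij\vert k}}$, and $\hat A_i\otimes\mathbf{1}_j$ merely reshuffles such a density (possibly shifting some of its support off $\sigma_{ij\vert k}$), after which $\Pi_{\sigma_{ij\vert k}}^{d_k}$ truncates it back onto $\sigma_{ij\vert k}$ --- an idempotent, divergence-free operation on densities. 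Hence $\hat A'_{i\vert k}\in\cb(\ch_{ij\vert k}^{d_k})$, and the spurious $(\Pi_{\sigma_{ij\vert k}}^{d_k})^2$ of the previous subsection can only re-enter if $\hat A_i\otimes\mathbf{1}_j$ commutes with the projector, in which case $\hat A'_{i\vert k}$ coincides with the bare kinematical operator and the commutator vanishes identically since $[\hat A_i\otimes\mathbf{1}_j,\mathbf{1}_i\otimes\hat A_j]=0$. This is exactly the content of the clause ``Note that this requires $[\Pi_{\sigma_{ij\vert k}}^{d_k},\hat A_i\otimes\mathbf{1}_j]\neq0$'', and such $\hat A_i,\hat A_j$ exist whenever $\hat C_i,\hat C_j$ are not multiples of the identity and $\sigma_{ij\vert k}$ genuinely constrains, since one may take the coefficient function $A_i$ concentrated on a single transition $\ket{c_i',d_i'}\!\mapsto\!\ket{c_i,d_i}$ with $c_i\neq c_i'$ and some $c_j$ for which only one of $c_i+c_j$, $c_i'+c_j$ lies in $\sigma_{ij\vert k}$.

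Next I would expand $\hat A_i\otimes\mathbf{1}_j$ and $\mathbf{1}_i\otimes\hat A_j$ in the $\hat C_i,\hat C_j$ eigenbases exactly as in the proof of Lemma~\ref{lem_2}, compute $\hat A'_{i\vert k}\hat A'_{j\vert k}\ket{\psi_{ij\vert k}^{d_k}}$ and $\hat A'_{j\vert k}\hat A'_{i\vert k}\ket{\psi_{ij\vert k}^{d_k}}$ term by term, and subtract. With a single left projector per factor, $\hat A'_{i\vert k}\hat A'_{j\vert k}\ket{\psi_{ij\vert k}^{d_k}}=\Pi_{\sigma_{ij\vert k}}^{d_k}(\hat A_i\otimes\mathbf{1}_j)\,\Pi_{\sigma_{ij\vert k}}^{d_k}(\mathbf{1}_i\otimes\hat A_j)\ket{\psi_{ij\vert k}^{d_k}}$ imposes three membership conditions --- one from $\ket{\psi_{ij\vert k}^{d_k}}\in\ch_{ij\vert k}^{d_k}$ and one from each projector --- reproducing the structure of Eq.~\eqref{aacom}: the two orderings differ only in which crossed condition ($c_i'+c_j\in\sigma_{ij\vert k}$ versus $c_i+c_j'\in\sigma_{ij\vert k}$, with one eigenvalue sitting in $A_i$ and the other in $A_j$) is enforced. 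When condition~\eqref{eCondition} holds, $\sigma_{ij\vert k}=\mathrm{M}(\sigma_{i\vert jk},\sigma_{j\vert ik})$, the implication already used in Lemma~\ref{lem_2} shows each crossed condition to be entailed by the remaining ones, so the two families cancel term by term and $[\hat A'_{i\vert k},\hat A'_{j\vert k}]\ket{\psi_{ij\vert k}^{d_k}}=0$ for every reduced physical state.

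Finally, when condition~\eqref{eCondition} fails I would exhibit an explicit counterexample. Since $\sigma_{ij\vert k}\neq\mathrm{M}(\sigma_{i\vert jk},\sigma_{j\vert ik})$ there are $c_i\in\sigma_{i\vert jk}$ and $c_j,c_j'$ with $c_i+c_j,\ c_i'+c_j'\in\sigma_{ij\vert k}$ but $c_i+c_j'\notin\sigma_{ij\vert k}$ (or the mirror configuration with $i,j$ interchanged). Choosing $\hat A_i,\hat A_j$ --- equivalently their coefficient functions $A_i,A_j$ --- supported on the transitions $c_i'\!\leftrightarrow\! c_i$ and $c_j\!\leftrightarrow\! c_j'$ makes the product $A_iA_j$ nonvanishing precisely where one family of terms in the commutator survives while the other is killed by the crossed condition, and choosing $\ket{\psi_{ij\vert k}^{d_k}}$ a density localized on the appropriate shell of $\sigma_{ij\vert k}$ makes $[\hat A'_{i\vert k},\hat A'_{j\vert k}]\ket{\psi_{ij\vert k}^{d_k}}$ manifestly nonzero and finite. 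Together with Lemma~\ref{lem_2}, this yields Theorem~\ref{thm_2}.

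The hard part will be executing the last step in a rigged-Hilbert-space–consistent way: one must check that, with the single-projector convention maintained throughout, no intermediate step reintroduces a $(\Pi_{\sigma_{ij\vert k}}^{d_k})^2$ (in particular from the diagonal part of $\hat A_i$ or $\hat A_j$), that the surviving operator term is a genuine element of $\cb(\ch_{ij\vert k}^{d_k})$, and that it pairs with the chosen density to give a strictly nonzero result rather than a formally nonzero expression which is really $0\cdot\infty$. The algebraic bookkeeping of the crossed summing conditions, by contrast, is routine once that convention is fixed.
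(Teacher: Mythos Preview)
Your approach is correct and in substance identical to the paper's, but you are working harder than necessary. The paper's proof collapses your entire second and third paragraphs into a single identity: writing $\ket{\psi_{ij\vert k}^{d_k}}=\Pi_{\sigma_{ij\vert k}}^{d_k}\ket{\psi_{ij}}$ for some kinematical $\ket{\psi_{ij}}\in\ch_i\otimes\ch_j$, one has
\[
\big[\hat A'_{i\vert k},\hat A'_{j\vert k}\big]\ket{\psi_{ij\vert k}^{d_k}}
=\big(\Pi_{\sigma_{ij\vert k}}^{d_k}(\hat A_i\otimes\mathbf{1}_j)\Pi_{\sigma_{ij\vert k}}^{d_k}(\mathbf{1}_i\otimes\hat A_j)\Pi_{\sigma_{ij\vert k}}^{d_k}-(i\leftrightarrow j)\big)\ket{\psi_{ij}}
=\big[\hat A_{i\vert k},\hat A_{j\vert k}\big]\ket{\psi_{ij}},
\]
where the right-hand side is \emph{literally} the operator of Eq.~\eqref{aacom} acting on a kinematical state. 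Thus the counting of the three crossed summing conditions, the cancellation when condition~\eqref{eCondition} holds, and the explicit non-vanishing when it fails are all inherited verbatim from Lemma~\ref{lem_2}; no separate term-by-term recomputation or explicit construction of $\ket{\psi_{ij\vert k}^{d_k}}$ is needed. Your worries about reintroducing $(\Pi_{\sigma_{ij\vert k}}^{d_k})^2$ are also dissolved by this identity: exactly three projectors appear, one supplied by the state and one by each primed operator, so the improper square never forms. What your more explicit route buys is a clearer picture of \emph{why} the single-left-projector convention is the right one in the rigged setting, which the paper leaves implicit.
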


\begin{proof}
On account of the improper nature of the projector, it follows that that $\big[\Pi_{\sigma_{ij\vert k}}^{d_k},\hat A_i\otimes\mathbf{1}_j\big]\neq0$ for $\hat A'_{i\vert k}$ defined in the statement of the Lemma to be contained in $\cb(\ch_{ij\vert k}^{d_k})$ and likewise for $\hat A'_{j\vert k}$. The statement then follows by recalling that $\ket{\psi_{ij\vert k}^{d_k}}=\Pi_{\sigma_{ij\vert k}}^{d_k}\,\ket{\psi_{ij}}$ for some $\ket{\psi_{ij}}\in\ch_i\otimes\ch_j$, noting that
\ba 
\big[\hat A'_{i\vert k},\hat A'_{j\vert k}\big]\,\ket{\psi_{ij\vert k}^{d_k}} = \big[\hat A_{i\vert k},\hat A_{j\vert k}\big]\,\ket{\psi_{ij}},\nn
\ea
where $\big[\hat A_{i\vert k},\hat A_{j\vert k}\big]$ coincides with the expression given in Eq.~\eqref{aacom}, and the proof of Lemma~\ref{lem_2}.
\end{proof}
In conjunction, this proves Theorem~\ref{thm_2}.

\subsection{Two kinematical qubits can be one physical qutrit}

As an example of factorizability condition of Equation~\eqref{eCondition}, consider a system consisting of two identical qubits and a free particle subject to the Hamiltonian constraint
\begin{equation}
\hat{C} = -\frac{\hat{\sigma}^{(z)}_{A}}{2} -\frac{\hat{\sigma}^{(z)}_{B}}{2} + \frac{\hat{p}_{C}^{2}}{2m} 
\end{equation} 
where $\hat{\sigma}^{(z)}_{j}:=\ket{0}\!\bra{0}_j -\ket{1}\!\bra{1}_j$ is the Pauli $z$ matrix corresponding to system $j$. Choosing $C$ as a temporal reference frame (\textit{i.e.}\ a clock), we have $\sigma_{AB\vert C} = \lbrace 0, -1 \rbrace$. There are then no subsets of $\sigma_{A}$ and $\sigma_{B}$ satisfying the factorization condition, Eq.~\eqref{eCondition}, thus $\ch_{AB|C}^{d_C}$,  where $d_C=\pm1$, does not factorize into a subset of $\mathcal{H}_A$ and a subset of $\mathcal{H}_B$, though we can combine certain elements of $\mathcal{H}_A$ and $\mathcal{H}_B$ to label the states of $\ch_{AB|C}^{d_C}$.

The kinematical space of the two qubits, $\mathcal{H}_A \otimes \mathcal{H}_B$, is $4$-dimensional. On the other hand, the reduced physical Hilbert space corresponding to conditioning on clock $C$, \textit{i.e.}\ $\Pi^{d_C}_{\sigma_{AB \vert C}} ( \mathcal{H}_A \otimes \mathcal{H}_B )$,  is three dimensional, with a basis constructed from the kinematic states $\lbrace \ket{0_{A} 1_{B}}, \ket{1_{A} 0_{B}}, \ket{0_{A} 0_{B}} \rbrace$. The first two states correspond to $\hat{C}_C$ taking the value $0$, and the third state corresponds to $\hat{C}_C$ taking the value $1$. In this reference frame, the system evolves according to the Hamiltonian~\cite{hoehn2019trinity} $\hat{H}_{AB}^\mathrm{phys} :=\Pi_{\sigma_{AB \vert C}}^{d_C} (\hat{C}_A +\hat{C}_B)$, whose three eigenstates correspond to eigenvalues $\lbrace 0, 0, -1 \rbrace$, and does not separate into two subsystems with additive energies. The system thus evolves not as two qubits, but rather as a single qutrit.

\subsection{Frame-dependent factorizability}

In the main text, we considered constraints of the form where $\sigma_\mathrm{A}=\mathds{R}_{+}$, $\sigma_\mathrm{B} =\mathds{R}_{+}$, and $\sigma_\mathrm{C} =\mathds{R}$. From $C$'s perspective, the kinematical factorizability of $A$ and $B$ is unchanged, \textit{i.e.}\ $\Pi_{\sigma_{\mathrm{AB} \vert \mathrm{C}}} ( \mathcal{H}_\mathrm{A} \otimes \mathcal{H}_\mathrm{B} ) = \mathcal{H}_\mathrm{A} \otimes \mathcal{H}_\mathrm{B}$. From $B$'s perspective, on the other hand, the reduced physical Hilbert space does not factor into $A$ and $C$ parts, as we now prove by contradiction. First assume that condition~\eqref{eCondition} is satisfied, namely that there exists $\tilde{\sigma}_\mathrm{A} \subseteq \sigma_\mathrm{A}$ and $\tilde{\sigma}_\mathrm{C}\subseteq \sigma_\mathrm{C}$ such that $\sigma_{\mathrm{AC} \vert \mathrm{B}} = \mathrm{M} \left(\tilde{\sigma}_\mathrm{A} , \tilde{\sigma}_\mathrm{C} \right)$. Note that $\sigma_{\mathrm{AC} \vert \mathrm{B}}=\mathds{R}_{-}$ (\textit{i.e.}\ the negative reals) for the class of constraints considered. Now consider some positive eigenvalue $a$ of $\hat{C}_\mathrm{A}$ and the corresponding negative eigenvalue $-a$ of $\hat{C}_\mathrm{C}$. Their sum, $0$, is in $\sigma_{\mathrm{AC} \vert \mathrm{B}}$, and thus $a \in \tilde{\sigma}_\mathrm{A}$ and $-a\in \tilde{\sigma}_\mathrm{C}$. Then consider a smaller positive eigenvalue $a'<a$ of $\hat{C}_\mathrm{A}$, and the corresponding negative eigenvalue $-a'$ of $\hat{C}_\mathrm{C}$. Their sum, again $0$, is in $\sigma_{\mathrm{AC} \vert \mathrm{B}}$, and thus $a' \in \tilde{\sigma}_\mathrm{A}$ and $-a'\in \tilde{\sigma}_\mathrm{C}$. Finally, consider the positive eigenvalue $a$ of $\hat{C}_\mathrm{A}$, and the negative eigenvalue $-a'$ of $\hat{C}_\mathrm{C}$. Now their sum is greater than zero, and thus not an element of $\sigma_{\mathrm{AC} \vert \mathrm{B}}$. Yet their sum is by definition an element of $\mathrm{M} \left(\tilde{\sigma}_\mathrm{A} , \tilde{\sigma}_\mathrm{C} \right)$, and by hypothesis $\sigma_{\mathrm{AC} \vert \mathrm{B}} = \mathrm{M} \left(\tilde{\sigma}_\mathrm{A} , \tilde{\sigma}_\mathrm{C} \right)$. We thus arrive at a contradiction, and there therefore does not exist $\tilde{\sigma}_\mathrm{A} \subseteq \sigma_\mathrm{A}$ and $\tilde{\sigma}_\mathrm{C}\subseteq \sigma_\mathrm{C}$ such that $\sigma_{\mathrm{AC} \vert \mathrm{B}} = \mathrm{M} \left(\tilde{\sigma}_\mathrm{A} , \tilde{\sigma}_\mathrm{C} \right)$, and consequently, taking possible degeneracies of $\hat C_B$ into account,
\begin{equation}\label{nofac}
\Pi^{d_B}_{\sigma_{\mathrm{AC} \vert \mathrm{B}}} \left( \mathcal{H}_\mathrm{A} \otimes \mathcal{H}_\mathrm{C} \right) \neq \tilde{\mathcal{H}}^{d_B}_\mathrm{A} \otimes \tilde{\mathcal{H}}^{d_B}_\mathrm{C} 
\end{equation}
for some $\tilde{\mathcal{H}}^{d_B}_\mathrm{A}\subseteq {\mathcal{H}}_\mathrm{A}$ and $\tilde{\mathcal{H}}^{d_B}_\mathrm{C}\subseteq {\mathcal{H}}_\mathrm{C}$. Therefore, the kinematical factorizability into an $\mathrm{A}$ part and a $\mathrm{C}$ part does not survive on the reduced physical space in $B$'s perspective.

As a specific example of this type of constraint, we considered
\begin{equation}
\hat{C} = \frac{\hat{p}_{A}^{2}}{2} + \frac{\hat{p}_{B}^{2}}{2} + \hat{p}_{C} .\label{examplesup}
\end{equation}
To demonstrate that the distinction between kinematical subsystems carries over to $\ch_{AB|C}$, but not to $\ch_{AC|B}^{d_B}$, where $d_B=\pm1$, we considered the kinematical canonical pairs ${(\hat{x}_{i},\hat{p}_{i}),}(\hat{x}_{j},\hat{p}_{j})$ on {$\ch_i$ and} $\mathcal{H}_{j}$. From $C$'s perspective, we have $i,j=A,B$ and these appear {unaltered} as
\begin{align}
   {\Pi_{\sigma_{AB\vert C}}}\, (\hat{x}_{A}\otimes\mathbf{1}_{B},\hat{p}_{A}\otimes\mathbf{1}_{B})\, {\Pi_{\sigma_{AB\vert C}}} &=
  (\hat{x}_{A}\otimes\mathbf{1}_{B},\hat{p}_{A}\otimes\mathbf{1}_{B}) \\
   {\Pi_{\sigma_{AB\vert C}}}\,(\mathbf{1}_{A}\otimes\hat{x}_{B},\mathbf{1}_{A}\otimes\hat{p}_{B})\,{\Pi_{\sigma_{AB\vert C}}}
   &=
   (\mathbf{1}_{A}\otimes\hat{x}_B,\mathbf{1}_{A}\otimes\hat{p}_{B}) ,
\end{align}
since in this case $\Pi_{\sigma_{AB\vert C}}=\mathbf{1}_{AB}$. By contrast, from $B$'s perspective, we have {$i,j=A,C$ and} would like to compute 
\begin{align}
    (\hat{x}_{A\vert B},\hat{p}_{A\vert B})&\ce{\Pi^{d_B}_{\sigma_{AC\vert B}}}\,(\hat{x}_{A}\otimes\mathbf{1}_{C},\hat{p}_{A}\otimes\mathbf{1}_{C})\,{\Pi^{d_B}_{\sigma_{AC\vert B}}}
     \\
  (\hat{x}_{C\vert B},\hat{p}_{C\vert B})&\ce{\Pi^{d_B}_{\sigma_{AC\vert B}}}\,   (\mathbf{1}_{A}\otimes\hat{x}_{C},\mathbf{1}_{A}\otimes\hat{p}_{C})\,{\Pi^{d_B}_{\sigma_{AC\vert B}}} .
\end{align}
$\Pi_{\sigma_{AC\vert B}}^{d_B}$ is a proper projector in this case. In order to compute these expressions, first note that due to the degeneracy of the operator $\hat p^2$, a complete eigenbasis for the single particle Hilbert space is given by $\ket{\fp^2, d}$, where $\fp^2 \in \Rl$, $d = \pm 1$. This basis is related to the momentum eigenbasis by $\ket p = \ket{\fp^2, d}$ with $d = {\rm sign}(p) 1$  and permits us to write the identity
\begin{align}
\hat p = \int p \ketbra{p}{p} dp = \sum_{d = \pm 1} \int_{\fp^2 = 0}^\infty d \sqrt{\fp^2} \ketbra{\fp^2,d}{\fp^2,d} d\fp^2
\end{align}
in a form which is useful in the calculation. For the constraint in Eq.~\eqref{examplesup}, this yields
\begin{align}
  \hat x_{A|B} &=  \sum_{d_A,d_A' = \pm 1}  \int_{\fp_A^2 = 0}^{\infty} \int_{(\fp_A^2)' = 0}^{\infty} \int_{p_C = -\infty}^{{\rm min}(-\fp_A^2, -(\fp_A^2)')/2 } \braket{p_A|\hat x |p_A'}  \ketbra{\fp_A^2,d_A}{(\fp_A^2)',d_A'} \otimes \ketbra{p_C}{p_C} d\fp_A^2 dp_C   d(\fp_A^2)' ,\\
   \hat p_{A|B} &= \sum_{d_A = \pm 1} \int_{\fp_A^2 = 0}^\infty \int_{p_C = - \infty}^{p_C = -\fp_A^2/2} d_A \sqrt{\fp_A^2} \ketbra{\fp_A^2,d_A}{\fp_A^2,d_A} \otimes \ketbra{p_C}{p_C} d\fp_A^2 dp_C,\\
      \hat x_{C|B} &= \sum_{d_A = \pm 1} \int_{\fp_A^2 = 0}^\infty \int_{p_C = - \infty}^{p_C = -\fp_A^2/2} \int_{p_C' = - \infty}^{p_C' = -\fp_A^2/2} \ketbra{\fp_A^2,d_A}{\fp_A^2,d_A} \otimes \braket{p_C|\hat x |p_C'}  \ketbra{p_C}{p_C'} d\fp_A^2 dp_C dp_C',\\
   \hat p_{C|B} &=  \sum_{d_A = \pm 1} \int_{\fp_A^2 = 0}^\infty \int_{p_C = - \infty}^{p_C = -\fp_A^2/2} \ketbra{\fp_A^2,d_A}{\fp_A^2,d_A} \otimes p_C \ketbra{p_C}{p_C} d\fp_A^2 dp_C.
\end{align}

As one can check, in accordance with Lemma~\ref{lem_2}, the canonical commutation relations between the kinematical observables $\hat{x}_{A}$, $\hat{p}_{A}$, $\hat{x}_{C}$, and $\hat{p}_{C}$ are not preserved for their corresponding physical observables on $\ch_{AC|B}^{d_B}$. While all brackets involving $\hat p_{A\vert B},\hat p_{C\vert B}$ are left invariant, $[\hat p_{A|B},\hat p_{C|B}] =[\hat x_{A|B},\hat p_{C|B}] = [\hat p_{A|B},\hat x_{C|B}] = 0$, since $\hat{p}_{A}$, $\hat{p}_{C}$ commute with the constraint, and therefore with $\Pi_{AC\vert B}^{d_B}$, one now finds that $[\hat x_{A\vert B},\hat x_{C\vert B}]\neq0$.  
Indeed,
\begin{align}
\hat x_{A|B}\,\hat x_{C|B}  &=\sum_{d_A,d_A' = \pm 1}  \int_{\fp_A^2 = 0}^{\infty} \int_{(\fp_A^2)' = 0}^{\infty} \int_{p_C = -\infty}^{p_C = {\rm min}(-\fp_A^2, -(\fp_A^2)')/2}  \int_{p_C' = - \infty}^{p_C' = -(\fp_A^2)'/2} \braket{p_A|\hat x |p_A'}  \ketbra{\fp_A^2,d_A}{(\fp_A^2)',d_A'} \nn\\
&\q\q\q\q\q\q\q\q\q\q\q\q\q\q\q\q\q\q\q\q\q\q\q\otimes \braket{p_C|\hat x |p_C'}  \ketbra{p_C}{p_C'}\, d\fp_A^2   d(\fp_A^2)'\, dp_C \,dp_C',\\
\hat x_{C|B}\,\hat x_{A|B}  &=\sum_{d_A,d_A' = \pm 1}  \int_{\fp_A^2 = 0}^{\infty} \int_{(\fp_A^2)' = 0}^{\infty}\int_{p_C = - \infty}^{p_C = -\fp_A^2/2} \int_{p_C' = -\infty}^{{\rm min}(-\fp_A^2, -(\fp_A^2)')/2 }   \braket{p_A|\hat x |p_A'}  \ketbra{\fp_A^2,d_A}{(\fp_A^2)',d_A'} \nn\\
&\q\q\q\q\q\q\q\q\q\q\q\q\q\q\q\q\q\q\q\q\q\q\q\otimes \braket{p_C|\hat x |p_C'}  \ketbra{p_C}{p_C'}\, d\fp_A^2   d(\fp_A^2)' \,dp_C \,dp_C'.
\end{align}
$[\hat x_{A|B},\hat x_{C|B}]$ will not be zero  because both terms contain factor elements $\ketbra{p_C}{p_C'}$ but the values of $p_C$ and $p_C'$ are not identical for both terms. For instance consider the case where $\fp_A^2 = 2$ and $(\fp_A^2)' = 4$. Then the term   $\hat x_{C|B}\,\hat x_{A|B}$ will contain an element $\ketbra{p_C= -1}{p_C' = -2}$ whereas the term $\hat x_{A|B}\,\hat x_{C|B}$ will not, since for this term $\ketbra{p_C}{p_C'}$ is such that the value of $p_C$ is bounded above by $-2$.

Hence, the transformed canonical pairs no longer commute with one another and as such $(\hat x_{A\vert B},\hat p_{A\vert B}),(\hat x_{C\vert B},\hat p_{C\vert B})$ do \emph{not} induce a tensor factorization across $A$ and $C$ in the reduced physical Hilbert space $\ch_{AC\vert B}^{d_B}$ for both $d_B=\pm1$, thereby elucidating Eq.~\eqref{nofac}.

As mentioned in the main body, this does not imply, however, that $\ch_{AC\vert B}^{d_B}$ does not admit a tensor factorization. Indeed, it is simple to construct one by translating the canonical pairs $(\hat x_A\otimes\mathbf{1}_B,\hat p_A\otimes\mathbf{1}_B),(\mathbf{1}_A\otimes\hat x_B,\mathbf{1}_A\otimes\hat p_B)$ from $C$'s perspective via the QRF transformation $\Lambda_{C\to B}^{d_B}$ into $B$'s perspective. The latter, being an invertible isometry when acting on reduced physical states, will preserve commutation relations. Hence, the images of the canonical pairs from $C$'s perspective will again be mutually commuting canonical pairs in $B$'s perspective and thereby induce a corresponding tensor factorization of $\ch_{AC\vert B}^{d_B}$. However, these new canonical pairs in $B$'s perspective will be combinations of $A$ and $C$ degrees of freedom; for example, using the maps constructed in \cite{hohn2020switch,HLSrelativistic} one can check that $\Lambda_{C\to B}^{d_B}(\hat x_A\otimes\mathbf{1}_B,\hat p_A\otimes\mathbf{1}_B)\,\Lambda_{B\to C}^{d_B} = \Pi_{\sigma_{AC\vert B}}^{d_B}\,(\hat x_A\otimes\mathbf{1}_C-2\hat p_A\otimes\hat x_C,\hat p_A\otimes\mathbf{1}_C)$.\footnote{Note that both operators actually commute with $\Pi_{\sigma_{AC\vert B}}^{d_B}$ since they commute with $\hat C_A\otimes\mathbf{1}_C+\mathbf{1}_A\otimes\hat C_C$.} Consequently, the  thus induced tensor factorization of $\ch_{AC\vert B}^{d_B}$ is \emph{not} one between $A$ and $C$.

All of this can also be understood at the level of the physical Hilbert space $
\ch_{
\rm phys}$. For example, using the reduction map (and its inverse) one can embed the canonical pairs $(\hat x_A\otimes\mathbf{1}_B,\hat p_A\otimes\mathbf{1}_B),(\mathbf{1}_A\otimes\hat x_B,\mathbf{1}_A\otimes\hat p_B)$ from $C$'s perspective into the perspective-neutral setting as the relational observables explicitly given in \cite{hohn2020switch} (here corresponding to clock $C$ reading $\tau=0$) 
\ba 
\hat F_{x_A\otimes 1_B,C}&=&\hat x_A\otimes\mathbf{1}_B\otimes\mathbf{1}_C-2\,\hat p_A\otimes\mathbf{1}_B\otimes\hat x_C,\nn\\
\hat F_{p_A\otimes 1_B,C}&=&\hat p_A\otimes\mathbf{1}_B\otimes\mathbf{1}_C,\nn\\
\hat F_{1_A\otimes x_B,C}&=&\mathbf{1}_A\otimes\hat x_B\otimes\mathbf{1}_C-2\,\mathbf{1}_A\otimes\hat p_B\otimes\hat x_C,\nn\\
\hat F_{1_A\otimes p_B,C}&=&\mathbf{1}_A\otimes\hat p_B\otimes\mathbf{1}_C.\nn
\ea 
Clearly, these are mutually commuting canonically conjugate pairs and thereby induce a corresponding physical tensor factorization of $\ch_{\rm phys}$ via the Stone-von Neumann theorem. 

By contrast, the modified canonical pairs $(\hat x_{A\vert B},\hat p_{A\vert B}),(\hat x_{C\vert B},\hat p_{C\vert B})\in\cb(\ch_{AC\vert B}^{d_B})$ correspond to the relational observables $\hat F^{d_B}_{x_A\otimes1_C,B},\hat F^{d_B}_{p_A\otimes1_C,B}$ and $
\hat F^{d_B}_{1_A\otimes x_C,B},\hat F_{1_A\otimes p_C,B}^{d_B}$ (here corresponding to clock $B$ reading $\tau=0$) \cite{HLSrelativistic}. These actually define equivalence relations on physical states
\ba 
\hat F^{d_B}_{x_A\otimes1_C,B}\ket{\psi_{\rm phys}}=\hat F^{d_B}_{x_{A\vert B},B}\ket{\psi_{\rm phys}},\nn
\ea
(likewise for the other canonical variables) and satisfy a homomorphism in the form \cite{HLSrelativistic}
\ba 
\hat F^{d_B}_{[f,g],B}\ket{\psi_{\rm phys}}=\Big[\hat F^{d_B}_{f,B},\hat F^{d_B}_{g,B}\Big]\ket{\psi_{\rm phys}},
\ea
for any $\hat f,\hat g\in\cb(\ch_{AC\vert B}^{d_B})$ and $\ket{\psi_{\rm phys}}\in\ch_{\rm phys}$. Hence, since as we have shown $[\hat x_{A\vert B},\hat x_{C\vert B}]\neq0$, already these relational Dirac observables form mutually non-commuting canonical pairs on $\ch_{\rm phys}$. Accordingly, the relational observables relative to $B$ corresponding to the kinematical operators $(\hat x_A,\hat p_A),(\hat x_C,\hat p_C)$ also do \emph{not} induce a physical tensor factorization of $\ch_{\rm phys}$ in terms of these observables.

\end{document}